\documentclass[journal,onecolumn,12pt]{IEEEtran}
\usepackage[numbers,sort,compress]{natbib}
\usepackage{geometry}
\usepackage{graphicx}
\usepackage{booktabs}
\usepackage{array}
\usepackage{longtable}
\usepackage{float}
\usepackage{microtype}

\usepackage{tikz}
\usetikzlibrary{calc}
\usepackage{enumitem}

\usepackage{amsfonts,color,morefloats}
\usepackage{amssymb,amsmath,latexsym,amsthm}
\usepackage{hyperref}
\hypersetup{colorlinks=true, linkcolor=blue, citecolor=blue, urlcolor=blue}
\usepackage{color}
\usepackage{makecell}
\usepackage{bm}
\usepackage{comment}

\makeatletter
\def\bstctlcite{\@ifnextchar[{\@bstctlcite}{\@bstctlcite[@auxout]}}
\def\@bstctlcite[#1]#2{\@bsphack
  \@for\@citeb:=#2\do{%
    \edef\@citeb{\expandafter\@firstofone\@citeb}%
    \if@filesw\immediate\write\csname #1\endcsname{\string\citation{\@citeb}}\fi}%
  \@esphack}
\makeatother

\newtheorem{theorem}{Theorem}
\newtheorem{lemma}{Lemma}
\newtheorem{proposition}{Proposition}
\newtheorem{corollary}{Corollary}

\newtheorem{definition}{Definition}
\newtheorem{example}{Example}

\newtheorem{remark}{Remark}

\newcommand{\F}{\mathbb{{F}}}

\newcommand{\mb}[1]{\mathbf{#1}}

\newcommand{\wt}{{\mathrm{wt}}}

\newcommand{\supp}{{\mathrm{supp}}}

\newcommand{\lrc}{{\mathrm{LRC}}}

\newcommand{\ud}{{\mathrm{d}}}
\newcommand{\emailaddr}[1]{\href{mailto:#1}{\nolinkurl{#1}}}

\title{Moment-based linear programming bounds for locally recoverable codes}
\author{Shujian Li\thanks{Department of Mathematics, The Hong Kong University of Science and Technology, Clear Water Bay, Hong Kong. Email: \emailaddr{slidv@connect.ust.hk}.},
\and Hengjia Wei\thanks{School of Mathematics and Statistics, Xi'an Jiaotong University, Xi'an 710049, China. Email: \emailaddr{hjwei05@xjtu.edu.cn}.},
\and Maosheng Xiong\thanks{Department of Mathematics, The Hong Kong University of Science and Technology, Clear Water Bay, Hong Kong. Email: \emailaddr{mamsxiong@ust.hk}.}}
\date{}

\begin{document}

\bstctlcite{IEEEexample:BSTcontrol}

\maketitle

\begin{abstract}
In this paper we derive new Delsarte-type linear programming bounds for
$q$-ary $(r,\delta)$-locally recoverable codes (LRCs) with three
attributes: first, the variable set is comparable in size to that of the
classical Delsarte LP; second, our LP exploits the higher-order information forced by the local-distance condition through order \(\delta-2\), in the sense that for nondegenerate linear codes, its balanced base part gives exactly the same dimension bound as the symmetrized
refined-weight LP of Gruica, Jany, and Ravagnani, while
the additional constraints, nonvacuous whenever $\delta \ge 3$, give a
further strengthening; and third, it applies to general
$(r,\delta)$-LRCs, linear and nonlinear alike. Extensive computations over binary and ternary alphabets show that the convex-hull LP yields improvements not captured by the previous LP and often sharpens the shortening and generalized Singleton bounds. 
\end{abstract}

{\bf Keywords:} locally recoverable codes, linear programming bounds, MacWilliams transform

\section{Introduction}
An \((r,\delta)\)-locally recoverable code (LRC) is a code in which
every coordinate lies in a local set of size at most \(r+\delta-1\)
whose restriction has minimum distance at
least~\(\delta\)~\cite{prakashOptimalLinear2012,freij-hollantiMatroidTheory2018}.
Consequently, any \(\delta-1\) erasures inside such a set can be
repaired from the set alone.  The locality paradigm arose from the need
to limit the number of helper nodes and disk accesses involved in
repairing failed nodes in distributed
storage~\cite{gopalanLocalityCodeword2012,huangErasureCodingAzure2012}:
the case \(\delta=2\) is the original notion of locality~\(r\), and
general \(\delta\) is the refinement allowing several simultaneous
failures within one group.

It has since developed into a broad coding-theoretic program
encompassing
constructions~\cite{tamoFamilyOptimal2014,bargLocallyRecoverableCurves2017,micheliConstructionsOptimal2020,haoBoundsConstructions2020},
alphabet-dependent converse
bounds~\cite{cadambeBoundsSize2015,agarwalCombinatorialAlphabetdependent2018,grezetAlphabetDependentBounds2019a},
bounds obtained by matroid and polymatroid
methods~\cite{tamoOptimalMatroid2016,freij-hollantiMatroidTheory2018,westerbackPolymatroid2015}, codes that are not assumed to be
linear~\cite{papailiopoulosLocallyRepairable2014,forbesLocalityCodewordSymbols2014},
limits on the length of optimal codes~\cite{guruswamiHowLong2019}, and
connections with
regeneration~\cite{dimakisNetworkCodingDSS2010,kamathLocalRegeneration2014,balajiErasureCoding2018}. 
Determining the exact trade-off among the block length \(n\), the code
size \(|C|\), the global minimum distance \(d\), the locality \(r\),
the local distance \(\delta\), and the alphabet size \(q\) remains a
fundamental open problem.

Two bounds serve as the standard baselines.  The generalized Singleton
bound 
in~\cite{gopalanLocalityCodeword2012,prakashOptimalLinear2012}, 
with extensions to arbitrary codes through entropy polymatroids
~\cite{westerbackPolymatroid2015,freij-hollantiMatroidTheory2018}, is explicit and broadly applicable
but, being alphabet-independent, can be loose over small fields.  In
that regime the sharpest known bounds often come from shortening
arguments~\cite{cadambeBoundsSize2015,grezetAlphabetDependentBounds2019a},
whose evaluation, however, depends on the largely unknown
classical-code functions \(A_q(N,d)\) and
\(k_{\mathrm{opt}}^{(q)}(N,d)\) across many shortened lengths \(N\),
and must ultimately rely on further bounds or on
tables~\cite{grasslCodeTables,brouwerCodeTables}.

Linear programming (LP), pioneered by Delsarte~\cite{Delsarteassoc}, 
is among the most powerful tools for bounding codes over a fixed 
alphabet; see for example the McEliece--Rodemich--Rumsey--Welch (MRRW) rate bounds~\cite{mcelieceNewUpper1977}.
Yet incorporating locality into an LP is considerably more 
subtle. An earlier product-association-scheme LP of Agarwal, Barg, Hu,
Mazumdar, and Tamo applies to arbitrary codes, but assumes that the coordinates are partitioned into
disjoint repair groups of a common size
~\cite{agarwalCombinatorialAlphabetdependent2018}. More recently, Gruica, Jany, and Ravagnani~\cite{gruicaDualityLP2023,gruicaLRCSDuality2026} introduced a refined weight distribution for nondegenerate
linear \((r,\delta)\)-LRCs and established a MacWilliams-type
identity leading to a locality-aware LP.

Three limitations of the refined-weight LP motivate the present work. First, for \(\delta>2\), its low-weight dual locality constraint is necessary but does not exploit the higher local-distance information; see~\cite[Remark~4.12]{gruicaLRCSDuality2026}. Second, its variables are 
indexed by both weights and coordinates, making the formulation substantially larger than 
the classical Delsarte LP. Third, the method is inherently 
linear and does not extend to nonlinear LRCs.

\smallskip
\noindent\textbf{Our contribution.}
We address these three limitations by deriving local factorial-moment identities through order \(\delta-2\). Averaging these identities over the recovered coordinates produces a compact Delsarte-type LP with a number of variables comparable to that of the classical distance-distribution LP. The construction applies to arbitrary linear and nonlinear \((r,\delta)\)-LRCs with general recovery sets. We obtain a box LP, which treats the moments separately, and a convex-hull LP, which preserves their joint feasibility. For nondegenerate linear codes, the balanced base LP gives the same dimension bound as the symmetrized refined-weight LP
of~\cite{gruicaLRCSDuality2026}.  When \(\delta=2\), the positive-order moment constraints are vacuous; for
\(\delta\ge3\), they provide the new strengthening. Extensive computations over binary and ternary alphabets show that the convex-hull LP yields improvements not captured by the previous LP and often sharpens the shortening and generalized Singleton bounds.

The paper is organized as follows. Section~\ref{sec:prel} fixes notation and reviews the background
used throughout: the definition of a general \((r,\delta)\)-LRC, the generalized Singleton and shortening bounds, the Krawtchouk polynomials, and the MacWilliams transform. Section~\ref{sec:facmomid} establishes the local factorial-moment identities, first for linear LRCs and then in the nonlinear
setting. Section~\ref{sec:main} derives the two resulting bounds, the
box LP and the convex-hull LP. Section~\ref{sec:comparison} shows that, for nondegenerate linear codes, our zeroth-moment LP is
equivalent after symmetrization to the refined-weight LP of~\cite{gruicaLRCSDuality2026}, yielding a compact reformulation of the latter. Section~\ref{sec:computations} presents numerical comparisons,
and Section~\ref{sec:conclusion} concludes.

\section{Preliminaries}\label{sec:prel}
Throughout the paper, we fix the following notations. 

Let \(q\) be a prime power and \(\F_q\) the finite field of order
\(q\). For a positive integer \(n\), write \([n] \triangleq
\{1,2,\ldots,n\}\). For \(S \subseteq [n]\), write \(S^c
\triangleq [n] \setminus S\), denote its cardinality by \(|S|\),
and $\F_q^S$ denotes the vector space over $\F_q$ whose coordinates are indexed by $S$. We identify $\F_q^n$ with $\F_q^{[n]}$. Let \(\pi_S : \F_q^n \to \F_q^{S}\) be the coordinate
projection onto \(S\). For \(\mb x = (x_1, \ldots, x_n) \in
\F_q^n\), the \emph{Hamming weight} and \emph{support} of \(\mb x\)
are
\[
\wt(\mb x) \triangleq |\{i : x_i \ne 0\}|, \qquad
\supp(\mb x) \triangleq \{i : x_i \ne 0\}.
\]
The \emph{Hamming distance} between \(\mb x, \mb y \in \F_q^n\) is
\(\ud(\mb x, \mb y) \triangleq \wt(\mb x - \mb y)\). The
\emph{minimum distance} of a code \(C \subseteq \F_q^n\) is
\[
\ud(C) \triangleq \min_{\mb x \ne \mb y \in C} \ud(\mb x, \mb y),
\]
with the convention \(\ud(C) = 0\) when \(|C| \le 1\).

\subsection{LRCs and baseline bounds}

\begin{definition}[$(r,\delta)$-LRC]\label{def:lrc-general}
	Let \(r\) and \(\delta\) be positive integers with \(\delta \ge
	2\). A code \(C \subseteq \F_q^n\), not necessarily linear, is an
	\emph{\((r,\delta)\)-locally recoverable code} (\((r,\delta)\)-$\mathrm{LRC}$)
	if for every \(i \in [n]\) there exists a subset \(S_i \subseteq
	[n] \setminus \{i\}\) such that
	\[
	|S_i| \le r + \delta - 2 \quad \text{and} \quad
	\ud\bigl(\pi_{S_i \cup \{i\}}(C)\bigr) \ge \delta.
	\]
	We call \(S_i\) an \emph{\((r,\delta)\)-recovery set} for
	coordinate \(i\).
\end{definition}

This definition, adopted from~\cite{freij-hollantiMatroidTheory2018},
covers both linear and nonlinear codes, in line with the
nonlinear-locality viewpoint of Forbes and
Yekhanin~\cite{forbesLocalityCodewordSymbols2014} for $\delta=2$. When \(C\) is
linear and \(\delta \ge 2\), it reduces to the standard linear
\((r,\delta)\)-LRC definition of Prakash et
al.~\cite{prakashOptimalLinear2012,gruicaLRCSDuality2026}.

The following simple property of $(r,\delta)$-LRCs might be easily overlooked: 
\begin{proposition}\label{prop:d-ge-delta} Let $r,\delta$ be positive integers with $\delta \ge 2$. If \(C \subseteq \F_q^n\) is an \((r,\delta)\)-LRC with at least two codewords, then
\begin{eqnarray*} 
\ud (C) \ge \delta.
\end{eqnarray*}
\end{proposition}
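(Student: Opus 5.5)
Take two distinct codewords \(\mb x \ne \mb y \in C\); such a pair exists because \(|C| \ge 2\). We will show \(\ud(\mb x,\mb y) \ge \delta\), which gives \(\ud(C) \ge \delta\) by the definition of minimum distance.

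Since \(\mb x \ne \mb y\), we can choose a coordinate \(i \in [n]\) with \(x_i \ne y_i\). By Definition~\ref{def:lrc-general}, \(i\) has a recovery set \(S_i\). Put \(T = S_i \cup \{i\}\). The projections \(\pi_T(\mb x)\) and \(\pi_T(\mb y)\) are two elements of \(\pi_T(C)\), and they are distinct because they already differ in coordinate \(i \in T\). Hence they are two distinct codewords of the projected code \(\pi_T(C)\). This code therefore has at least two elements, so the convention \(\ud=0\) for codes of size at most one does not apply. The hypothesis \(\ud(\pi_T(C)) \ge \delta\) then gives \(\ud(\pi_T(\mb x),\pi_T(\mb y)) \ge \delta\).

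Projection can only delete positions where \(\mb x\) and \(\mb y\) differ, so
\[
\ud(\mb x,\mb y) \ge \ud(\pi_T(\mb x),\pi_T(\mb y)) \ge \delta .
\]
This completes the argument.

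The only point needing care is the choice of \(i\). We must pick a coordinate where the two codewords differ, so that their projections are genuinely distinct and the local distance condition can be applied. Apart from this choice, no real obstacle is expected.
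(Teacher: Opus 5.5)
Your proposal is correct and follows essentially the same argument as the paper. You choose a coordinate $i$ where the two codewords differ, apply the local distance condition on $T_i = S_i \cup \{i\}$ to their distinct projections, and note that projection cannot increase distance. Your remark that $\pi_{T_i}(C)$ has at least two elements, so the convention $\ud = 0$ does not interfere, is a careful detail the paper leaves implicit.
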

\begin{proof} For any distinct $\mb x, \mb y \in C$, choose a coordinate $i$ such that $x_i \ne y_i$. Let $S_i$ be a recovery set for $i$ and let $T_i=S_i \cup \{i\}$. Then $\pi_{T_i}(\mb x),\pi_{T_i}(\mb y)$ are distinct codewords of $\pi_{T_i}(C)$ of distance at least $\delta$. Thus $\ud(\mb x, \mb y) \ge \delta$. 
\end{proof}

We now recall two baseline upper bounds: the generalized Singleton bound and the shortening bound, against which we will compare our LP bounds in Section~\ref{sec:computations}. 

Let \(C \subseteq \F_q^n\) be an \((r,\delta)\)-LRC with minimum distance \(d\) where $n \ge d \ge \delta \ge 2$. Set
\[ K \triangleq \log_q |C|.\]
The \emph{generalized Singleton bound} states that
\begin{align}\label{eq:gen-singleton}
	d \le n - \lceil K \rceil + 1 - \left(\bigl\lceil \lceil K \rceil/r \bigr\rceil - 1\right)(\delta - 1).
\end{align}
For linear codes $K=k$ is an integer and \eqref{eq:gen-singleton} is the
bound of Prakash et al.~\cite{prakashOptimalLinear2012}. For general,
possibly nonlinear codes, \eqref{eq:gen-singleton} is the case $t=1$ of
the Singleton-type bound for polymatroids of Westerb\"ack, Freij and
Hollanti~\cite{westerbackPolymatroid2015}; a statement in the present
notation is~\cite[Theorem~10]{freij-hollantiMatroidTheory2018}. We let \(k_{\mathrm{GS}}\) denote the largest integer satisfying
\eqref{eq:gen-singleton}, so that
\[ K \,\le\, \lceil K \rceil \,\le\, k_{\mathrm{GS}} \]
for linear and nonlinear codes alike. 

For \(N\ge0\), \(A_q(N,d)\) denotes the largest cardinality
of a set \(D \subseteq \F_q^N\) such that 
\[\ud(\mb x,\mb y) \ge d\]
for all distinct \(\mb x,\mb y \in D\). Since \(d\ge2\), we have \(A_q(N,d)=1\) whenever
\(N<d\). 

\begin{theorem}[Shortening bound]
\label{thm:nonlinear-shortening}
Let \(C\subseteq \F_q^n\) be an \((r,\delta)\)-LRC with minimum
distance \(d\), and set \(K=\log_q |C|\). Then
\begin{align*}
K
\le
\min_{\ell\in\mathcal{L}}
\left\{
\ell r+
\log_q A_q\bigl(n-\ell R_{\mathrm{loc}},d\bigr)
\right\},
\end{align*}
where
\[
R_{\mathrm{loc}}\triangleq r+\delta-1,
\qquad
\mathcal L
\triangleq
\left\{
0,1,\ldots,
\left\lfloor
\frac{n}{R_{\mathrm{loc}}}
\right\rfloor
\right\}. 
\]
\end{theorem}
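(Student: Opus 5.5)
Fix \(\ell\in\mathcal L\); it suffices to show \(|C|\le q^{\ell r}A_q(n-\ell R_{\mathrm{loc}},d)\). The plan is the standard shortening argument. First find a set \(U\subseteq[n]\) with \(|U|\ge \ell R_{\mathrm{loc}}\) such that the projection \(\pi_U(C)\) takes at most \(q^{\ell r}\) values. Then fix the most popular value of \(\pi_U\), obtaining a subcode of size at least \(|C|/q^{\ell r}\), and delete the coordinates in \(U\). This gives a code of length \(n-|U|\le n-\ell R_{\mathrm{loc}}\) whose minimum distance is still at least \(d\), because any two of its words agreed on \(U\). Its size is therefore at most \(A_q(n-|U|,d)\le A_q(n-\ell R_{\mathrm{loc}},d)\). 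The last inequality holds since \(A_q(N,d)\) is nondecreasing in \(N\): pad codewords with zeros.

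The key step is building \(U\) greedily, one local group at a time. At step \(j\), with current set \(U_{j-1}\) (\(U_0=\emptyset\)), pick a coordinate \(i\notin U_{j-1}\) and set \(T_i=S_i\cup\{i\}\), so \(|T_i|\le r+\delta-1\). Since \(\pi_{T_i}(C)\) has minimum distance \(\ge\delta\), any \(|T_i|-\delta+1\le r\) coordinates of \(T_i\) determine the rest of \(\pi_{T_i}(c)\).

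\begin{itemize}
\item If \(|T_i|= r+\delta-1\), choose \(r\) coordinates of \(T_i\) that include \(T_i\cap U_{j-1}\) when possible. These \(r\) coordinates determine all of \(T_i\), so adding \(T_i\) to \(U\) multiplies the number of possible values of \(\pi_U\) by at most \(q^{r-|T_i\cap U_{j-1}|}\) while enlarging \(U\) by \(|T_i\setminus U_{j-1}|\).
\item If \(|T_i|<R_{\mathrm{loc}}\) or the groups overlap, the size bookkeeping needs more care. I would first try the cleaner potential: keep the invariant that \(\pi_{U_j}\) takes at most \(q^{|U_j|-j(\delta-1)}\) values. Adding \(T_i\) determined by \(\le |T_i|-\delta+1\) free symbols, of which those in \(U_{j-1}\) are already known, raises the exponent by at most \(|T_i\setminus U_{j-1}|-(\delta-1)\). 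For this to be valid, the \(\delta-1\) "determined" coordinates must be chosen among \(T_i\setminus U_{j-1}\), i.e. we need \(|T_i\setminus U_{j-1}|\ge\delta-1\), which is guaranteed as \(i\notin U_{j-1}\) and we may pad \(U\).
\item To reach exactly length \(\ell R_{\mathrm{loc}}\) and log-size \(\le\ell r\), I would pad \(U_j\) with arbitrary extra coordinates. Each padding coordinate costs one unit on both sides, which preserves the bound "exponent \(\le |U|-j(\delta-1)\)". So after \(\ell\) steps I would enlarge \(U\) to size exactly \(\ell R_{\mathrm{loc}}\) (possible since \(\ell R_{\mathrm{loc}}\le n\)), giving exponent \(\le \ell R_{\mathrm{loc}}-\ell(\delta-1)=\ell r\).
\end{itemize}

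The main obstacle is the nonlinear version of "\(|T|-\delta+1\) coordinates determine \(T\)" when part of \(T\) is already fixed. I would state it as a counting lemma. If \(D\subseteq\F_q^T\) has minimum distance \(\ge\delta\) and \(P\subseteq T\) satisfies \(|T\setminus P|\ge\delta-1\), then for any \(Q\subseteq T\setminus P\) with \(|T\setminus P\setminus Q|=\delta-1\), the map \(\pi_{P\cup Q}\) is injective on \(D\). Hence, conditioned on \(\pi_P\), there are at most \(q^{|Q|}\) extensions. Applied with \(P=T_i\cap U_{j-1}\), this gives the multiplicative count used in the invariant above. Taking logs and minimizing over \(\ell\) then gives the theorem. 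The case \(\ell=0\) is just \(|C|\le A_q(n,d)\).
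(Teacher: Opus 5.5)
Your skeleton matches the paper's proof: the potential $|U_j|-j(\delta-1)$, the injectivity lemma for the local code, padding to $\ell R_{\mathrm{loc}}$, the largest fiber, and puncturing. However, the inductive step has a genuine gap. You claim that $|T_i\setminus U_{j-1}|\ge\delta-1$ ``is guaranteed as $i\notin U_{j-1}$.'' This is false: $i\notin U_{j-1}$ only gives $|T_i\setminus U_{j-1}|\ge 1$. Padding cannot repair this, because each padded coordinate adds one to both $|U|$ and the exponent, so it never supplies the missing credit of $\delta-1$.

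Here is a concrete failure. Take $q=2$, $r=1$, $\delta=3$, $n=12$, and $C=\{(a,a,a,a,b,b,b,b,c,c,c,c)\}$, with recovery sets inside the blocks of four equal coordinates (so $d=4$). Take $\ell=2$. Start with $U_1=T_1=\{1,2,3\}$. Then pick $i=4$ with $T_4=\{1,2,4\}$, which gives $U_2=\{1,2,3,4\}$. Here $|\pi_{U_2}(C)|=2$, whereas your invariant demands at most $2^{4-2\cdot 2}=1$. Padding to size $6$ with $\{5,9\}$ then gives $|\pi_U(C)|=2^3>2^{\ell r}=2^2$. So the procedure as stated does not produce the set $U$ you need.

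The missing idea is in the choice of $i$. The paper chooses $i\notin U_{j-1}$ such that some $\mb x,\mb y\in C$ agree on $U_{j-1}$ but have $x_i\neq y_i$. Then $\pi_{T_i}(\mb x)\neq\pi_{T_i}(\mb y)$ are distinct local codewords that agree on $T_i\cap U_{j-1}$. By the local distance they differ in at least $\delta$ coordinates of $T_i\setminus U_{j-1}$, so $|T_i\setminus U_{j-1}|\ge\delta$ and your counting lemma applies.

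Such an $i$ exists exactly when $\pi_{U_{j-1}}$ is not injective on $C$. If it is injective, then $|C|=|\pi_{U_{j-1}}(C)|\le q^{(j-1)r}\le q^{\ell r}$, and the bound holds trivially because $A_q\ge 1$. The paper organizes this as follows: if $K\le\ell r$ there is nothing to prove; otherwise $\log_q|\pi_{U_j}(C)|\le jr<K$ at every stage, which forces non-injectivity.

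With this choice of $i$, together with $|U_j|\le jR_{\mathrm{loc}}$ (since $|T_i\setminus U_{j-1}|\le R_{\mathrm{loc}}$), the rest of your argument goes through.
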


The alphabet-dependent shortening method was introduced by
Cadambe and Mazumdar for ordinary locality, corresponding to
\(\delta=2\), and their result applies to arbitrary, possibly
nonlinear, codes~\cite[Theorem~1]{cadambeBoundsSize2015}.
Rawat, Mazumdar, and Vishwanath obtained an analogous bound
for possibly nonlinear codes with cooperative locality
~\cite[Remark~3]{rawatCooperativeLocal2015a}, while Grezet
et al.\ established the linear all-symbol \((r,\delta)\)-version
~\cite[Corollary~2]{grezetAlphabetDependentBounds2019a}.

The cooperative-locality result does not directly imply
Theorem~\ref{thm:nonlinear-shortening}.  Indeed, in the
definition of Rawat et al., every set of \(\ell\) coordinates
must be jointly recoverable from at most \(R\) other
coordinates.  An \((r,\delta)\)-LRC generally guarantees only
\(((\delta-1)r,\delta-1)\)-cooperative locality, rather than
\((r,\delta-1)\)-cooperative locality.  Substitution into the
cooperative-locality bound therefore gives a different
inequality.  To the best of our knowledge, the arbitrary,
possibly nonlinear, all-symbol \((r,\delta)\)-statement in
Theorem~\ref{thm:nonlinear-shortening} has not previously been
stated explicitly.  Since the proof is very similar to those in \cite{cadambeBoundsSize2015,rawatCooperativeLocal2015a,grezetAlphabetDependentBounds2019a} and for the sake of completeness, we include a self-contained proof. 

\begin{proof}
For \(J\subseteq[n]\), put
\[
	h(J)\triangleq\log_q|\pi_J(C)|.
\]
Suppose that
\(h(J)<K\). Then \(|\pi_J(C)|<|C|\), so there are distinct
\(\mb x,\mb y\in C\) that agree on \(J\). Choose
\(i\notin J\) such that \(x_i\ne y_i\), let \(S_i\) be a recovery
set for \(i\), and write
\[
	T_i\triangleq S_i\cup\{i\},
	\qquad
	B\triangleq T_i\setminus J.
\]
The words \(\pi_{T_i}(\mb x)\) and
\(\pi_{T_i}(\mb y)\) are distinct and agree on \(T_i\cap J\).
Since \(\ud(\pi_{T_i}(C))\ge\delta\), they differ in at least
\(\delta\) coordinates of \(B\); in particular, \(|B|\ge\delta\).

Choose \(E\subseteq B\) with \(|E|=\delta-1\). The projection onto \(T_i\setminus E\) is injective on the
local code \(\pi_{T_i}(C)\), since otherwise two distinct local codewords
would differ in at most \(\delta-1\) coordinates. Consequently,
the map
\[
	\pi_{J\cup T_i}(C)
	\longrightarrow
	\pi_J(C)\times\F_q^{B\setminus E},
	\qquad
	\pi_{J\cup T_i}(\mb c)
	\longmapsto
	\bigl(\pi_J(\mb c),\pi_{B\setminus E}(\mb c)\bigr),
\]
is injective. Indeed, its two components determine the coordinates
in \(T_i\setminus E\), and injectivity of the punctured local code 
determines the coordinates in \(E\). It follows that
\begin{align}\label{eq:shortening-local-step}
	h(J\cup T_i)
	\le h(J)+|B|-(\delta-1).
\end{align}

Fix \(\ell\in\mathcal{L}\). If \(\ell r\ge K\), then 
\[
	K\le \ell r+
	\log_q A_q\bigl(n-\ell R_{\mathrm{loc}},d\bigr)
\]
is immediate since \(A_q(N,d)\ge1\). Now suppose \(\ell r<K\). Starting with \(J_0=\varnothing\), we apply the
local step recursively to construct
\[
	J_0\subseteq J_1\subseteq\cdots\subseteq J_\ell
\]
such that, for \(0\le j\le\ell\),
\begin{align}\label{eq:shortening-invariants}
	|J_j|\le jR_{\mathrm{loc}},
	\qquad
	h(J_j)\le |J_j|-j(\delta-1)\le jr.
\end{align}
Indeed, if \(j<\ell\), then \(h(J_j)\le jr\le\ell r<K\), so the
local step applies. Setting \(J_{j+1}=J_j\cup T_i\), with
\(B=T_i\setminus J_j\), gives
\[
	|J_{j+1}|=|J_j|+|B|\le(j+1)R_{\mathrm{loc}},
\]
and~\eqref{eq:shortening-local-step} gives
\[
	h(J_{j+1})
	\le |J_{j+1}|-(j+1)(\delta-1).
\]
This proves~\eqref{eq:shortening-invariants}.

Since \(\ell R_{\mathrm{loc}}\le n\), enlarge \(J_\ell\), if
necessary, to a set \(I\subseteq[n]\) with
\(|I|=\ell R_{\mathrm{loc}}\). Adding one coordinate increases
the logarithm of the projection size by at most one, and hence
\begin{align}\label{eq:shortening-projection}
	h(I)
	&\le h(J_\ell)+|I|-|J_\ell| \notag\\
	&\le |J_\ell|-\ell(\delta-1)
		+\ell R_{\mathrm{loc}}-|J_\ell|
	 =\ell r.
\end{align}

Choose a largest fiber of the projection onto \(I\); namely, choose
\(\mb a\in\pi_I(C)\) for which
\[
	C_{\mb a}\triangleq
	\{\mb c\in C:\pi_I(\mb c)=\mb a\}
\]
has maximum cardinality. By~\eqref{eq:shortening-projection},
\[
	|C_{\mb a}|
	\ge\frac{|C|}{|\pi_I(C)|}
	=q^{K-h(I)}
	\ge q^{K-\ell r}.
\]
All words in \(C_{\mb a}\) agree on \(I\). Therefore puncturing
\(I\) is injective on this fiber and preserves the distance
between every two of its distinct words. Thus
\(\pi_{I^c}(C_{\mb a})\) is a code of length
\(n-\ell R_{\mathrm{loc}}\), cardinality \(|C_{\mb a}|\), and
pairwise distances at least \(d\). By the definition of \(A_q\),
\[
	q^{K-\ell r}
	\le |C_{\mb a}|
	\le A_q\bigl(n-\ell R_{\mathrm{loc}},d\bigr).
\]
Taking logarithms proves the desired inequality for the fixed
\(\ell\). Minimizing over \(\ell\in\mathcal{L}\) completes the
proof.
\end{proof}

For linear codes, the following stronger version was established
by Grezet et al.~\cite[Corollary~2]{grezetAlphabetDependentBounds2019a}.

\begin{corollary}[Linear shortening bound]
\label{cor:linear-shortening}
Let \(C\le \F_q^n\) be a linear \((r,\delta)\)-LRC of
dimension \(k\) and minimum distance \(d\). Then
\[
k
\le
\min_{\ell\in\mathcal L}
\left\{
\ell r+
k_{\mathrm{opt}}^{(q)}
\bigl(n-\ell R_{\mathrm{loc}},d\bigr)
\right\},
\]
where \(k_{\mathrm{opt}}^{(q)}(N,d)\) denotes the largest
dimension of a linear \(q\)-ary code of length \(N\) and
minimum distance at least \(d\).
\end{corollary}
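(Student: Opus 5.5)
We follow the proof of Theorem~\ref{thm:nonlinear-shortening} verbatim up to the construction of \(I\), and replace only the final step, where a largest fiber was taken, by a linear-algebraic shortening. For linear \(C\) of dimension \(k\) we have \(K=k\) and \(h(J)=\dim \pi_J(C)\), an integer rank function. The local step~\eqref{eq:shortening-local-step} and the invariants~\eqref{eq:shortening-invariants} hold unchanged, since linear codes are a special case of the earlier argument. Fix \(\ell\in\mathcal L\). If \(\ell r\ge k\), the inequality is immediate because \(k_{\mathrm{opt}}^{(q)}(N,d)\ge 0\). Otherwise we obtain, exactly as in~\eqref{eq:shortening-projection}, a set \(I\subseteq[n]\) with \(|I|=\ell R_{\mathrm{loc}}\) and \(\dim\pi_I(C)\le \ell r\).

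Instead of a largest fiber, we take the fiber over \(\mb 0\), that is, the shortened subcode
\[
C_{\mb 0}\triangleq\{\mb c\in C:\pi_I(\mb c)=\mb 0\}=\ker\bigl(\pi_I|_C\bigr).
\]
This is a linear subspace. By rank--nullity, \(\dim C_{\mb 0}=k-\dim\pi_I(C)\ge k-\ell r\). Every nonzero word of \(C_{\mb 0}\) vanishes on \(I\). Hence \(\pi_{I^c}\) is injective and weight-preserving on \(C_{\mb 0}\), so \(\pi_{I^c}(C_{\mb 0})\) is a linear code of length \(n-\ell R_{\mathrm{loc}}\), of dimension \(\dim C_{\mb 0}\), and of minimum distance at least \(d\) (when the dimension is positive; dimension zero is trivially allowed). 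By the definition of \(k_{\mathrm{opt}}^{(q)}\), we get \(k-\ell r\le k_{\mathrm{opt}}^{(q)}(n-\ell R_{\mathrm{loc}},d)\). Minimizing over \(\ell\) completes the proof.

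The main point to check is that the kernel-fiber step really gives a linear code, which is why \(k_{\mathrm{opt}}\) replaces \(\log_q A_q\). There is also a small subtlety when \(n-\ell R_{\mathrm{loc}}<d\): then \(k_{\mathrm{opt}}^{(q)}=0\), so the argument forces \(C_{\mb 0}=\{\mb 0\}\), which is consistent. Finally, the result is stronger than Theorem~\ref{thm:nonlinear-shortening} for linear codes because \(k_{\mathrm{opt}}^{(q)}(N,d)\le\log_q A_q(N,d)\).
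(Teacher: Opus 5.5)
Your proof is correct and takes essentially the intended route. The paper gives no separate proof here and only cites Grezet et al.; your argument is the linear specialization of the proof of Theorem~\ref{thm:nonlinear-shortening}, with the largest fiber replaced by the kernel \(\ker(\pi_I|_C)\) (i.e.\ shortening on \(I\)), which yields a linear \([n-\ell R_{\mathrm{loc}},\ge k-\ell r,\ge d]_q\) code. Your reading of \(k_{\mathrm{opt}}^{(q)}\) as allowing dimension \(0\), so that it equals \(0\) when \(n-\ell R_{\mathrm{loc}}<d\), matches how the paper implicitly uses it, even though its stated convention \(\ud(C)=0\) for \(|C|\le1\) would technically exclude the zero code.
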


Let \(U_{\mathrm{Del}}^{(q)}(N,d)\) denote the ordinary
Delsarte LP upper bound on \(A_q(N,d)\), and define
\[
K_{\mathrm{LP}}^{(q)}(N,d)
\triangleq
\left\lfloor
\log_q U_{\mathrm{Del}}^{(q)}(N,d)
\right\rfloor.
\]
Corollary~\ref{cor:linear-shortening} gives the two linear shortening
benchmarks
\begin{align*}
k_{\mathrm{SH,LP}}
&\triangleq
\min_{\ell\in\mathcal L}
\left\{
\ell r+
K_{\mathrm{LP}}^{(q)}
\bigl(n-\ell R_{\mathrm{loc}},d\bigr)
\right\},
\\
k_{\mathrm{SH,opt}}
&\triangleq
\min_{\ell\in\mathcal L}
\left\{
\ell r+
k_{\mathrm{opt}}^{(q)}
\bigl(n-\ell R_{\mathrm{loc}},d\bigr)
\right\}.
\end{align*}
When all the required values of
\(k_{\mathrm{opt}}^{(q)}(N,d)\) are known exactly using Grassl's tables
~\cite{grasslCodeTables}, we evaluate the latter benchmark and denote it by
\(k_{\mathrm{SH,exact}}\).

For arbitrary codes, Theorem~
\ref{thm:nonlinear-shortening} gives the analogous benchmarks
\begin{align*}
M_{\mathrm{SH,LP}}
&\triangleq
\min_{\ell\in\mathcal L}
\left\{
\ell r+
\log_q U_{\mathrm{Del}}^{(q)}
\bigl(n-\ell R_{\mathrm{loc}},d\bigr)
\right\},
\\
M_{\mathrm{SH,opt}}
&\triangleq
\min_{\ell\in\mathcal L}
\left\{
\ell r+
\log_q A_q
\bigl(n-\ell R_{\mathrm{loc}},d\bigr)
\right\}.
\end{align*}
When all the required values of \(A_q(N,d)\) are known
exactly using Brouwer's binary and ternary code tables
~\cite{brouwerCodeTables}, we evaluate \(M_{\mathrm{SH,opt}}\) and denote it by
\(M_{\mathrm{SH,exact}}\).

Consequently,
\[
k\le k_{\mathrm{SH,opt}}\le k_{\mathrm{SH,LP}}, \quad \mbox{ if } C \mbox{ is linear}, 
\]
and
\[
K\le M_{\mathrm{SH,opt}}\le M_{\mathrm{SH,LP}}, \quad \mbox{ for arbitrary } C. 
\]

\subsection{Krawtchouk polynomials and MacWilliams transform}
For any positive integer $n$, the $q$-ary Krawtchouk polynomials $K_\bullet(\bullet;n,q)$ on length $n$ are defined as 
\begin{align*}
	K_\omega(j; n, q) \triangleq
	\sum_{\ell=0}^{\omega} (-1)^\ell (q-1)^{\omega - \ell}
	\binom{j}{\ell} \binom{n - j}{\omega - \ell}, \quad \forall \, 0 \le \omega,j \le n.
\end{align*}
Here \(\binom{a}{b}\) denotes the standard binomial coefficient
for nonnegative integers \(0 \le b \le a\), extended to all
integers \(a, b\) by the convention
\[
\binom{a}{b} = 0 \quad \text{if } a < 0,\ b < 0,\ \text{or } b > a.
\]
Equivalently, these Krawtchouk polynomials are given by the
generating function
\begin{align}\label{eq:Kraw-gen-prelim}
	\sum_{\omega=0}^n K_\omega(j; n, q)\, z^\omega
	= \bigl(1 + (q-1)z\bigr)^{n-j} (1 - z)^j.
\end{align}
Setting \(z = 1\) in~\eqref{eq:Kraw-gen-prelim} yields the
elementary identity
\begin{align}\label{eq:krawtchouk-sum}
	\sum_{\omega=0}^n K_\omega(j; n, q) =
		q^n\mathbf 1_{\{j=0\}}. 
\end{align}
We will also need the following elementary result: 
\begin{lemma}\label{lem:krawtchouk-moment}
	For all integers $0 \le j, m \le n$,
	\[
	\sum_{\omega = 0}^n \binom{\omega}{m} K_\omega(j; n, q) \,=\,
		(-1)^j\, q^{n-m}(q-1)^{m-j}\binom{n-j}{m-j}\mathbf 1_{\{j \le m\}}. 
	\]
\end{lemma}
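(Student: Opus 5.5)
We will prove the identity with the generating function \eqref{eq:Kraw-gen-prelim}. The binomial moment $\sum_\omega \binom{\omega}{m} a_\omega$ of a coefficient sequence is the $m$-th Taylor coefficient at $z=1$ of its generating function. Concretely, put
\[
F_j(z)\triangleq \sum_{\omega=0}^n K_\omega(j;n,q)\,z^\omega=\bigl(1+(q-1)z\bigr)^{n-j}(1-z)^j .
\]
Differentiating $m$ times and dividing by $m!$ gives
\[
\frac{1}{m!}F_j^{(m)}(1)=\sum_{\omega=0}^n\binom{\omega}{m}K_\omega(j;n,q).
\]
So the left-hand side of the lemma equals the coefficient of $u^m$ in $F_j(1+u)$.

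Next we substitute $z=1+u$. This gives $1-z=-u$ and $1+(q-1)z=q+(q-1)u$, so
\[
F_j(1+u)=(-1)^j u^j\bigl(q+(q-1)u\bigr)^{n-j}
=(-1)^j\sum_{t=0}^{n-j}\binom{n-j}{t}q^{\,n-j-t}(q-1)^t\,u^{j+t}.
\]
We read off the coefficient of $u^m$. If $m<j$ there is no such term, so the sum is $0$; this matches the indicator $\mathbf 1_{\{j\le m\}}$. If $j\le m$, the term with $t=m-j$ gives
\[
(-1)^j\binom{n-j}{m-j}q^{\,n-m}(q-1)^{m-j}.
\]
This is exactly the claimed value. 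When $m-j>n-j$ the binomial coefficient vanishes, which agrees with the stated convention, and so the range $0\le j,m\le n$ is covered.

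No step is a real obstacle. The only care needed is to justify the Taylor-coefficient reading. One way is to note that the polynomial identity $\sum_\omega a_\omega(1+u)^\omega=\sum_m\bigl(\sum_\omega\binom{\omega}{m}a_\omega\bigr)u^m$ follows from the binomial theorem by swapping finite sums. As a sanity check, the case $m=0$ recovers \eqref{eq:krawtchouk-sum}.
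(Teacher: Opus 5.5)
Your proof is correct and follows the paper's route. Both arguments read $\sum_\omega \binom{\omega}{m}K_\omega(j;n,q)$ as the $m$-th Taylor coefficient at $z=1$ of $\bigl(1+(q-1)z\bigr)^{n-j}(1-z)^j$. The only difference is cosmetic: you expand in $u=z-1$, where the factor $(-u)^j$ makes both the vanishing for $m<j$ and the single surviving term $t=m-j$ immediate, whereas the paper differentiates $m$ times with Leibniz's rule and then evaluates at $z=1$.
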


\begin{proof}
	By \eqref{eq:Kraw-gen-prelim},  we have
	\begin{align} \label{2:leib}
		\sum_{\omega=0}^n \binom{\omega}{m} K_\omega(j; n, q)\, z^{\omega - m} \,=\, \frac{1}{m!}\frac{\ud^m}{\ud z^m}\!\left[(1 + (q-1)z)^{n-j}(1-z)^j\right].
	\end{align}
	We evaluate both sides of~\eqref{2:leib} at $z = 1$.  The left-hand side is $\sum_\omega \binom{\omega}{m} K_\omega(j; n, q)$. By Leibniz's rule, the right-hand side is 
	\begin{align*}
		\mbox{RHS}=\frac{1}{m!}\sum_{s=0}^m\binom{m}{s}\frac{\ud^{m-s}}{\ud z^{m-s}}\!\left[\left(1+(q-1)z\right)^{n-j}\right] \cdot \frac{\ud^{s}}{\ud z^{s}}\!\left[(1-z)^{j}\right] \\
		=\sum_{s=0}^m\binom{n-j}{m-s} (q-1)^{m-s}\left(1+(q-1)z\right)^{n-j-m+s}\binom{j}{s}(-1)^s  \left(1-z\right)^{j-s}. 
	\end{align*}
	If $j > m$, then $(1-z)^{j-s}$ vanishes at $z=1$ for every $0\le s \le m$.  If $0 \le j \le m$, evaluating at $z=1$ leaves only the term $s=j$, and therefore 
	\begin{align*}
		\mbox{RHS}&= \binom{n-j}{m-j}(q-1)^{m-j}\left[\left(1+(q-1)z\right)^{n-m}\right]_{z=1} \cdot (-1)^j \\
		&= (-1)^j q^{n-m}(q-1)^{m-j}\binom{n-j}{m-j}. \qedhere
	\end{align*}
	This completes the proof of Lemma \ref{lem:krawtchouk-moment}. 
\end{proof}
We will also use the standard Krawtchouk inversion identity
\begin{align*}
\sum_{j=0}^n
K_j(x;n,q)K_\omega(j;n,q)
=
q^n\mathbf 1_{\{x=\omega\}},
\qquad
0\le x,\omega\le n.
\end{align*}
Equivalently, if
\[
Y_\omega=\sum_{j=0}^nX_jK_\omega(j;n,q),
\]
then
\[
X_j=\frac1{q^n}
\sum_{\omega=0}^nY_\omega K_j(\omega;n,q).
\]

The \emph{distance distribution} \(\{A_j\}_{j=0}^n\) of a code \(C
\subseteq \F_q^n\) is defined by~(see \cite{MacWilliamsd})
\begin{align}\label{eq:Aj-def}
	A_j \triangleq \frac{1}{|C|}
	\left|\{(\mb c, \mb c') \in C^2 : \ud(\mb c, \mb c') = j\}\right|,
	\qquad 0 \le j \le n,
\end{align}
and satisfies \(A_0 = 1\) and \(\sum_{j=0}^n A_j = |C|\).

The \emph{MacWilliams transform} of \(\{A_j\}_{j=0}^n\) on length $n$ is the
sequence \(\{B_\omega\}_{\omega=0}^n\) given by
\begin{align}\label{eq:Bw-def}
	B_\omega \triangleq \frac{1}{|C|}
	\sum_{j=0}^n A_j K_\omega(j; n, q), \qquad 0 \le \omega \le n.
\end{align}
It satisfies \(B_0 = 1\) and \(B_\omega \ge 0\) for all
\(\omega\). The nonnegativity follows from the character-sum
representation
\begin{align*}
	B_\omega = \frac{1}{|C|^2}
	\sum_{\substack{\mb v \in \F_q^n \\ \wt(\mb v) = \omega}}
	\left|\sum_{\mb c \in C} \chi(\mb v \cdot \mb c)\right|^2,
\end{align*}
where \(\chi\) is any nontrivial additive character of \(\F_q\)
and \(\mb v \cdot \mb x \triangleq \sum_{i=1}^n v_i x_i\).

When \(C\) is linear, the quantities \(A_j\) coincide with the usual
\emph{weight distribution} of $C$
\[
W_j(C) \triangleq |\{\mb c \in C : \wt(\mb c) = j\}|,
\]
and \(\{B_\omega\}\) is then the weight distribution of the dual
code 
\[
C^\perp \triangleq \left\{\mb a \in \F_q^n : \mb a \cdot \mb c = 0
\text{ for all } \mb c \in C\right\}.
\]

\section{General factorial moment identities}\label{sec:facmomid}
Let $C$ be an $(r,\delta)$-$\mathrm{LRC}$ over $\F_q$ with minimum distance $d$, where $n,r,\delta,d$ are fixed positive integers such that $n \ge d \ge \delta \ge 2$. For each coordinate $i\in[n]$, we fix one recovery set $S_i$ and write
\[
T_i \triangleq S_i\cup\{i\},\qquad t_i\triangleq |T_i|,
\qquad R\triangleq \min\{n,R_{\mathrm{loc}}\}=\min\{n,r+\delta-1\}.
\]
Obviously 
\begin{equation*}
	2\le \delta\le t_i\le R.
\end{equation*}
Indeed, $\ud(\pi_{T_i}(C))\ge\delta$ implies that the local projection has length at least $\delta$.

\subsection{The linear case}\label{sec:linear}

We first treat the case where $C$ is a linear $(r,\delta)$-LRC. The arguments here serve as a model for the nonlinear extension in Section~\ref{sec:nonlinear}.

For any subset $S \subseteq [n]$, we write $$C(S) \triangleq \{\mb c \in C : \supp(\mb c) \subseteq S\}.$$
This is a subcode of $C$. Since $C \subseteq \F_q^n$ is a linear code, we have two exact sequences
\[
\begin{array}{ c @{\;\to\;} c @{\;\to\;} c @{\;\to\;} c @{\;\to\;} c }
0 & C(S^c)     & C       & \pi_S(C)           & 0, \\[1ex]
0 & C^\perp(S) & C^\perp & \pi_{S^c}(C^\perp) & 0.
\end{array}
\]
As usual, exactness means that the image of each map is precisely the kernel of the next.

\begin{lemma}
	 \label{lem:macw-projection} For any subset $S \subseteq [n]$, we have
\begin{equation*} 
\pi_S(C)^\perp = \pi_S\!\left(C^\perp(S)\right).
\end{equation*}
\end{lemma}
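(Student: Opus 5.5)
We will prove the identity \(\pi_S(C)^\perp=\pi_S(C^\perp(S))\) in \(\F_q^S\) by showing one inclusion directly and then comparing dimensions. The dual on the left is taken inside \(\F_q^S\) with the standard inner product on \(S\)-indexed coordinates.

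\emph{Inclusion \(\supseteq\).} Take \(\mb a\in C^\perp(S)\), so \(\supp(\mb a)\subseteq S\) and \(\mb a\cdot\mb c=0\) for all \(\mb c\in C\). Because \(\mb a\) vanishes off \(S\), we have
\[
\mb a\cdot \mb c=\pi_S(\mb a)\cdot\pi_S(\mb c).
\]
Hence \(\pi_S(\mb a)\) is orthogonal to every element of \(\pi_S(C)\).

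\emph{Reverse direction.} Given \(\mb b\in\pi_S(C)^\perp\subseteq\F_q^S\), extend it by zeros to \(\mb a\in\F_q^n\) with \(\supp(\mb a)\subseteq S\). The same identity gives \(\mb a\cdot\mb c=\mb b\cdot\pi_S(\mb c)=0\) for every \(\mb c\in C\). So \(\mb a\in C^\perp(S)\) and \(\pi_S(\mb a)=\mb b\).

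This zero-extension argument already yields equality. A dimension count is therefore unnecessary, though one could confirm it using the two exact sequences displayed above. The map \(\pi_S\) restricted to \(C^\perp(S)\) is injective, since its kernel consists of vectors supported in both \(S\) and \(S^c\), i.e.\ only \(\mb 0\). Hence \(\dim \pi_S(C^\perp(S))=\dim C^\perp(S)\). From the second exact sequence, \(\dim C^\perp(S)=n-k-\dim\pi_{S^c}(C^\perp)\). Matching this with \(|S|-\dim\pi_S(C)\), which is \(|S|-k+\dim C(S^c)\) by the first sequence, would require the standard duality \(\dim\pi_{S^c}(C^\perp)=|S^c|-\dim C(S^c)\). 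That is the same statement applied to \(S^c\), so it is circular and should be avoided.

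\emph{Main point.} The only real obstacle is bookkeeping: the inner product on \(\F_q^S\) must be compatible with zero-extension. Once that is stated, both inclusions are immediate, and I would present the two-inclusion argument.
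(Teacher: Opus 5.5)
Your two-inclusion argument is correct and complete: the identity $\mb a\cdot\mb c=\pi_S(\mb a)\cdot\pi_S(\mb c)$ for $\supp(\mb a)\subseteq S$ gives $\supseteq$, and zero-extending $\mb b\in\pi_S(C)^\perp$ gives $\subseteq$. The paper gives no argument of its own and simply cites the standard puncturing--shortening duality (Huffman--Pless, Theorem~1.5.7), and your zero-extension argument is the usual self-contained proof of that fact. You can drop the dimension-count digression, since, as you note, it would presuppose the same duality for $C^\perp$ and $S^c$.
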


\begin{proof}
This is the standard duality relation between puncturing and shortening; see, for example, \cite[Theorem~1.5.7]{Huffmand}.
\end{proof}

For each $i \in [n]$, let the sets $S_i$ and $T_i=S_i \cup \{i\}$ be given as before.  Denote
\[ C^{(i)} \triangleq \pi_{T_i}(C), \quad k_i \triangleq \dim C^{(i)}, \quad d_i \triangleq d(C^{(i)}) \ge \delta. \]
The code $C^{(i)}$ is linear over $\F_q$ with parameters $[t_i,k_i,d_i]_q$. From the linear Singleton bound applied to $C^{(i)}$,
\begin{align}\label{eq:Singleton-local}
t_i - k_i \,\ge\, d_i - 1 \,\ge\, \delta - 1.
\end{align}
Denote
\begin{align}\label{eq:DEdef}
D^{(i)} \triangleq C^\perp(T_i) = \{\mb c \in C^\perp : \supp(\mb c) \subseteq T_i\}, \qquad E^{(i)} \triangleq \pi_{T_i}(D^{(i)}).
\end{align}
By Lemma~\ref{lem:macw-projection}, $E^{(i)}$ is the dual code of $C^{(i)}$; it has parameters $[t_i, t_i - k_i]_q$. Define 
\begin{align*}
U^{(i)} \triangleq \{\mb c \in C^\perp : \supp(\mb c) \subseteq T_i,\, c_i \neq 0\}.
\end{align*}
Note that $U^{(i)}$ is a subset of $D^{(i)}$, not a subspace.

Recall the following well-known orthogonal array property.

\begin{lemma} \label{lem:OA}
Let $D \subseteq \F_q^t$ be a linear code with dual minimum distance $d^\perp$. Then for every nonempty subset $J \subseteq [t]$ with $|J| \le d^\perp - 1$, the projection $\pi_J: {D} \to \F_q^{J}$ is surjective, and every vector in $\F_q^{J}$ is attained by exactly $|D|/q^{|J|}$ codewords of $D$. 
\end{lemma}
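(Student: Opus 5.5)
The plan is to reduce surjectivity and uniform fiber size to a statement about the dual code $D^\perp$. The map $\pi_J$ restricted to $D$ is $\F_q$-linear. So its image is a subspace of $\F_q^J$, and every attained vector has a fiber that is a coset of $\ker(\pi_J|_D)$. It therefore suffices to show that $\pi_J(D)=\F_q^J$. Then the fibers are the $q^{|J|}$ cosets of the kernel, each of size $|D|/|\pi_J(D)| = |D|/q^{|J|}$, and the second claim follows at once.

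To prove surjectivity, I would argue by duality inside $\F_q^J$. Suppose $\pi_J(D)$ is a proper subspace of $\F_q^J$. Then there is a nonzero $\mb a\in\F_q^J$ orthogonal to $\pi_J(D)$, that is, $\sum_{j\in J}a_jc_j=0$ for every $\mb c\in D$. Next, extend $\mb a$ by zeros to a vector $\tilde{\mb a}\in\F_q^t$. Then $\tilde{\mb a}\cdot\mb c=0$ for all $\mb c\in D$, so $\tilde{\mb a}\in D^\perp$. Moreover $\tilde{\mb a}$ is nonzero with $\wt(\tilde{\mb a})\le |J|\le d^\perp-1$. This contradicts the definition of $d^\perp$ as the minimum distance of $D^\perp$, which for a linear code equals the minimum nonzero weight. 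Hence $\pi_J(D)=\F_q^J$.

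Equivalently, one can phrase this with Lemma~\ref{lem:macw-projection}: $\pi_J(D)^\perp=\pi_J(D^\perp(J))$, and $D^\perp(J)=\{0\}$ because $D^\perp$ has no nonzero word of weight at most $|J|<d^\perp$. So $\pi_J(D)^\perp=0$ and $\pi_J(D)=\F_q^J$. I expect no real obstacle here. The only degenerate point is when $D^\perp=\{0\}$, that is, $D=\F_q^t$. With the convention $d^\perp=t+1$ (or $\infty$), the statement is trivially true since every projection of $\F_q^t$ is surjective, and I would remark on this convention.
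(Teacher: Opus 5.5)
Your proof is correct and is essentially the paper's argument. The paper cites the standard fact that any $|J|\le d^\perp-1$ columns of a generator matrix of $D$ are linearly independent; a dependency among those columns is exactly a nonzero dual codeword supported in $J$, i.e.\ your vector $\tilde{\mb a}$. From this the paper concludes that $\pi_J$ is a surjective linear map with fibers of size $|D|/q^{|J|}$, and you have simply proved that cited fact directly, without using a generator matrix.
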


\begin{proof}
This is a standard result: the dual distance being $d^\perp$ implies that any $|J| \le d^\perp - 1$ columns of a generator matrix of $D$ are linearly independent~\cite[Ch.~5]{MacWilliamsd}. Hence $\pi_J: D \to \F_q^{J}$ is a surjective linear map, so each fiber has size exactly $|D|/q^{|J|}$.
\end{proof}

Since $\left(E^{(i)}\right)^\perp=C^{(i)}$ and $d_i=\ud (C^{(i)}) \ge \delta \ge 2$, and the projection $\pi_{T_i} : D^{(i)} \to E^{(i)}$ is a bijective isometry (because every vector of $D^{(i)}$ is supported in $T_i$ by definition \eqref{eq:DEdef}, so projection loses no nonzero coordinates), by Lemma \ref{lem:OA} and \eqref{eq:Singleton-local} we have
\begin{align*}
|U^{(i)}| \,=\, \left(1 - q^{-1}\right)|D^{(i)}| \,=\, \left(1 - q^{-1}\right) q^{t_i - k_i} \ge \left(1 - q^{-1}\right) q^{\delta-1}.
\end{align*}
Denote by $u^{(i)}_\omega$ the number of weight-$\omega$ codewords in $U^{(i)}$, i.e., 
\begin{align} \label{lin:ui} u^{(i)}_\omega \triangleq |\{\mb c \in U^{(i)} : \wt(\mb c) = \omega\}|.
	\end{align} 
The definition gives $u_0^{(i)}=0$ for every $i$.  Also since $u_\omega^{(i)}=0$ for $\omega>t_i$ and $t_i\le R$, we have the zeroth moment identity
\begin{align}\label{eq:zeromom-linear}
\sum_{\omega = 1}^R u^{(i)}_\omega \,=\, |U^{(i)}| \ge \left(1-q^{-1}\right)q^{\delta-1}, \quad \forall \, i \in [n]. 
\end{align}

\begin{remark}  Inequality~\eqref{eq:zeromom-linear} is precisely \cite[Proposition~4.2]{gruicaLRCSDuality2026}.  Note that if $C$ is nondegenerate, then $C^\perp$ has no weight-one codeword, so $u_1^{(i)}=0$ because $D^{(i)}\subseteq C^\perp$ for every $i$; the sum in~\eqref{eq:zeromom-linear} may therefore be started at $\omega=2$, which is the form in which the inequality appears in~\cite{gruicaLRCSDuality2026}.
\end{remark}
\begin{theorem} \label{thm:moment-linear}
Let $C \subseteq \F_q^n$ be a linear $(r,\delta)$-$\lrc$. Then for every $i \in [n]$, and every integer $m$ with $0 \le m \le \delta - 2$, we have 
\begin{equation} \label{eq:moment-linear}
\sum_{\omega = 1}^R \binom{\omega - 1}{m} u^{(i)}_\omega \,=\, \binom{t_i - 1}{m}\,(1 - q^{-1})^m\,|U^{(i)}| .
\end{equation} 
\end{theorem}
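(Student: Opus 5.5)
Since the codewords in $U^{(i)}$ have $c_i\ne 0$, the weight of $\mb c\in U^{(i)}$ is $1+\wt(\pi_{S_i}(\mb c))$. The quantity $\binom{\omega-1}{m}$ therefore counts the $m$-subsets $J\subseteq S_i$ on which $\mb c$ is entirely nonzero. Our plan is a double count of pairs. We write
\[
\sum_{\omega}\binom{\omega-1}{m}u^{(i)}_\omega
=\sum_{\mb c\in U^{(i)}}\bigl|\{J\subseteq S_i:|J|=m,\ c_j\ne0\ \forall j\in J\}\bigr|
=\sum_{\substack{J\subseteq S_i\\ |J|=m}}\bigl|\{\mb c\in D^{(i)}: c_i\ne0,\ c_j\ne 0\ \forall j\in J\}\bigr|.
\]
Here we use $|S_i|=t_i-1$ and the fact that $\pi_{T_i}$ is a weight-preserving bijection $D^{(i)}\to E^{(i)}$. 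It then suffices to show that each inner count equals $(1-q^{-1})^{m+1}|D^{(i)}|$. Indeed, summing over the $\binom{t_i-1}{m}$ choices of $J$ and using $|U^{(i)}|=(1-q^{-1})|D^{(i)}|$ gives exactly the right-hand side of \eqref{eq:moment-linear}.

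For the inner count, we apply Lemma~\ref{lem:OA} to $E^{(i)}\subseteq\F_q^{T_i}$ with the coordinate set $J'=J\cup\{i\}$, which has size $m+1$. The dual of $E^{(i)}$ is $C^{(i)}$ by Lemma~\ref{lem:macw-projection}, and its minimum distance is $d_i\ge\delta$. Since $m\le\delta-2$, we get $|J'|=m+1\le\delta-1\le d_i-1$, so the hypothesis of Lemma~\ref{lem:OA} holds. Hence $\pi_{J'}:E^{(i)}\to\F_q^{J'}$ is surjective with all fibers of size $|E^{(i)}|/q^{m+1}$. The vectors in $\F_q^{J'}$ having all entries nonzero number $(q-1)^{m+1}$. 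The count is therefore $(q-1)^{m+1}q^{-(m+1)}|E^{(i)}|=(1-q^{-1})^{m+1}|D^{(i)}|$, as required.

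The only delicate point is the index bookkeeping. We must check that $\omega-1$ is exactly the number of nonzero coordinates in $S_i$, which is what makes the binomial coefficient combinatorially meaningful. We must also check that the range $m\le\delta-2$ is exactly what Lemma~\ref{lem:OA} requires, because the coordinate $i$ is always included in $J'$. The case $m=0$ recovers the identity $|U^{(i)}|=|U^{(i)}|$, consistent with \eqref{eq:zeromom-linear}. The summation range $1\le\omega\le R$ is harmless, since $u^{(i)}_\omega=0$ for $\omega>t_i$ and for $\omega=0$.
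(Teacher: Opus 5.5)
Your proposal is correct and follows the paper's proof essentially step for step. You double count the pairs $(\mathbf c, J)$ with $J\subseteq S_i$, $|J|=m$, $J\subseteq\supp(\mathbf c)$, then pass through the isometry $D^{(i)}\to E^{(i)}$ and apply the orthogonal-array Lemma~\ref{lem:OA} to the $m+1\le d_i-1$ coordinates $J\cup\{i\}$, getting $(1-q^{-1})^{m+1}|D^{(i)}|=(1-q^{-1})^m|U^{(i)}|$ for each of the $\binom{t_i-1}{m}$ choices of $J$.
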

\begin{remark}
	The case $m = 0$ of Theorem \ref{thm:moment-linear} corresponds to~\eqref{eq:zeromom-linear}.
\end{remark}
\begin{proof}
We evaluate the left-hand side of~\eqref{eq:moment-linear} in two different ways. First, the term $\binom{\omega - 1}{m} u^{(i)}_\omega$ counts the number of pairs $(\mb c, S)$ such that the codeword $\mb c \in U^{(i)}$ has weight $\omega$ and the subset $S \subseteq \supp(\mb c) \setminus \{i\}$ has cardinality $|S| = m$. Summing over all such $\omega$, the left-hand side of~\eqref{eq:moment-linear} counts the total number of such pairs.

We now count these pairs by choosing the subset $S$ first. There are $\binom{t_i - 1}{m}$ subsets $S \subseteq T_i \setminus \{i\}$ of size $m$. Then for each such fixed $S$, we must count the number of $\mb c \in U^{(i)}$ such that $S \subseteq \supp(\mb c)$, i.e., $c_j \neq 0$ for all $j \in S \cup \{i\}$.

Via the isometry $\pi_{T_i} : D^{(i)} \to E^{(i)}$, this equals the number of $\mb e \in E^{(i)}$ with $e_j \neq 0$ for all $j \in S \cup \{i\}$. Since $|S \cup \{i\}| = m + 1 \le \delta - 1 \le d_i - 1$ where $d_i=d\left((E^{(i)})^\perp\right)$, Lemma~\ref{lem:OA} applies: $\pi_{S \cup \{i\}}$ maps $E^{(i)}$ onto $\F_q^{|S \cup \{i\}|}$ uniformly, with each tuple in $\F_q^{m+1}$ attained exactly $q^{t_i - k_i - (m+1)}$ times. On the other hand, the number of tuples in $\F_q^{m+1}$ with all entries nonzero is $(q-1)^{m+1}$, so
\begin{align*}
|\{\mb e \in E^{(i)} : e_j \neq 0\ \forall j \in S \cup \{i\}\}| &\,=\, (q-1)^{m+1}\, q^{t_i - k_i - (m+1)} \\
&\,=\, (1 - q^{-1})^{m+1} q^{t_i - k_i} \,=\, (1 - q^{-1})^m\, |U^{(i)}|, 
\end{align*}
which gives exactly the right-hand side of~\eqref{eq:moment-linear}.
\end{proof}

\subsection{The general case}\label{sec:nonlinear}

We now extend Theorem~\ref{thm:moment-linear}  to general $(r,\delta)$-LRCs which are not necessarily linear. 

For the code $C \subseteq \F_q^n$, let  $\{A_j\}_{j=0}^n$ be its distance distribution as defined in \eqref{eq:Aj-def}. Let  $\{B_\omega\}_{\omega=0}^n$ be the MacWilliams transform as defined in \eqref{eq:Bw-def}. For each $i \in [n]$ and the recovery set $T_i$, we define the local distance distribution at $i$ as 
\begin{align}\label{eq:Aji-def}
A^{(i)}_j \,\triangleq\, \frac{1}{|C|}\left|\left\{(\mb c, \mb c') \in C^2 : \ud(\pi_{T_i}(\mb c), \pi_{T_i}(\mb c')) = j\right\}\right|, \qquad 0 \le j \le t_i.
\end{align}
Note that unlike the global distribution of a linear code, this local distribution is normalized by $|C|$ rather than by $|\pi_{T_i}(C)|$. Note also that $\sum_j A^{(i)}_j = |C|$.  The collision count at coordinate $i$ is defined as 
\begin{align}\label{eq:lambdai-def}
\lambda^{(i)} \,\triangleq\, A^{(i)}_0 \,=\, \frac{1}{|C|}\left|\{(\mb c, \mb c') \in C^2 : \pi_{T_i}(\mb c) = \pi_{T_i}(\mb c')\}\right|.
\end{align}
By definition, the MacWilliams transform of $\{A^{(i)}_j\}_{j=0}^{t_i}$ on length $t_i$ is
\begin{align}\label{eq:Bwi-def}
B^{(i)}_\omega \,\triangleq\, \frac{1}{|C|}\sum_{j = 0}^{t_i} A^{(i)}_j K_\omega(j; t_i, q), \qquad 0 \le \omega \le t_i.
\end{align}
Using the basic summation identity \eqref{eq:krawtchouk-sum},
we obtain 
$$B^{(i)}_0 = 1, \quad \sum_{\omega=0}^{t_i} B^{(i)}_{\omega} = \frac{q^{t_i}}{|C|} \lambda^{(i)}.$$ 
We also have the character-sum representation of $B^{(i)}_\omega$ as (see \cite[Ch.~5,~Theorem~6]{MacWilliamsd})
\begin{align}\label{eq:Bwi-charsum}
B^{(i)}_\omega \,=\, \frac{1}{|C|^2}\sum_{\substack{\mb v \in \F_q^{T_i}\\ \wt(\mb v) = \omega}}\left|\sum_{\mb c \in C}\chi(\mb v \cdot \pi_{T_i}(\mb c))\right|^2 \,\ge\, 0, \quad \forall \, 0 \le \omega \le t_i.
\end{align}
The condition $\ud(\pi_{T_i}(C)) \ge \delta$ implies that 
\begin{align}\label{eq:Aivanish}
A^{(i)}_j \,=\, 0, \qquad \forall \, 1 \le j \le \delta - 1.
\end{align}
We decompose $B^{(i)}_\omega$ in~\eqref{eq:Bwi-charsum} as \begin{align*}
B^{(i)}_\omega \,=\, u^{(i)}_\omega + r^{(i)}_\omega, 
\end{align*}
where
\begin{align}\label{eq:uidef-nonlin}
u^{(i)}_\omega &\,\triangleq\, \frac{1}{|C|^2}\sum_{\substack{\mb v \in \F_q^{T_i}\\ \wt(\mb v) = \omega\\ v_i \neq 0}}\left|\sum_{\mb c \in C}\chi(\mb v \cdot \pi_{T_i}(\mb c))\right|^2 \ge 0, \\
\label{eq:uidef-nonlin-r} 
r^{(i)}_\omega &\,\triangleq\, \frac{1}{|C|^2}\sum_{\substack{\mb v \in \F_q^{T_i}\\ \wt(\mb v) = \omega\\ v_i = 0}}\left|\sum_{\mb c \in C}\chi(\mb v \cdot \pi_{T_i}(\mb c))\right|^2 \ge 0. 
\end{align}
The definition gives $u_0^{(i)}=0$ for every $i \in [n]$. 

Define $A^{(S_i)}_j$ and $B^{(S_i)}_\omega$ exactly as in~\eqref{eq:Aji-def}--\eqref{eq:Bwi-def} but with the set $T_i$ replaced by $S_i$. Note that $B^{(S_i)}_\omega$ is of length $|S_i|=t_i-1$. In the character-sum representation of $r_{\omega}^{(i)}$ in~\eqref{eq:uidef-nonlin-r}, for any $\mb{v} \in \F_q^{T_i}$ with $\wt(\mb{v})=\omega$ and $v_i =0$, let $\mb v' \in \F_q^{S_i}$ be the restriction of this $\mb{v}$ to $S_i$, then $\mb v \cdot \pi_{T_i}(\mb c) = \mb v' \cdot \pi_{S_i}(\mb c)$, so we have 
\begin{align*}
r^{(i)}_\omega \,=\, B^{(S_i)}_\omega.
\end{align*}
We also have the following result. 
\begin{lemma} \label{lem:recovery}
Let $C \subseteq \F_q^n$ be an $(r,\delta)$-$\lrc$ with fixed recovery sets $S_i$ and $T_i=S_i \cup \{i\}$ for each $i$. Then for all $\mb c, \mb c' \in C$, $\pi_{S_i}(\mb c) = \pi_{S_i}(\mb c')$ implies that $\pi_{T_i}(\mb c) = \pi_{T_i}(\mb c')$. Consequently,
\[ A^{(S_i)}_0 \,=\, A^{(i)}_0 \,=\, \lambda^{(i)}, \qquad A^{(S_i)}_j \,=\, 0 \ \mbox{for}\ 1 \le j \le \delta - 2. \]
\end{lemma}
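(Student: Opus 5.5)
The plan is to derive everything from the single local-distance fact $\ud(\pi_{T_i}(C))\ge\delta\ge 2$.

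\textbf{The implication.} Take $\mb c,\mb c'\in C$ with $\pi_{S_i}(\mb c)=\pi_{S_i}(\mb c')$. Then $\pi_{T_i}(\mb c)$ and $\pi_{T_i}(\mb c')$ can differ only in coordinate $i$, so their distance is at most $1$. They are both words of $\pi_{T_i}(C)$, whose minimum distance is at least $\delta\ge2$. Hence they cannot be distinct, and $\pi_{T_i}(\mb c)=\pi_{T_i}(\mb c')$. The converse implication is trivial, since $S_i\subseteq T_i$.

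\textbf{The collision counts.} By the implication and its converse, the sets
\[
\{(\mb c,\mb c')\in C^2:\pi_{S_i}(\mb c)=\pi_{S_i}(\mb c')\}
\quad\text{and}\quad
\{(\mb c,\mb c')\in C^2:\pi_{T_i}(\mb c)=\pi_{T_i}(\mb c')\}
\]
are identical. Both $A^{(S_i)}_0$ and $A^{(i)}_0$ are normalized by $|C|$, so they coincide. By~\eqref{eq:lambdai-def}, the common value is $\lambda^{(i)}$.

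\textbf{The vanishing.} Suppose $1\le \ud(\pi_{S_i}(\mb c),\pi_{S_i}(\mb c'))=j\le\delta-2$. Then the $S_i$-projections are distinct, so the $T_i$-projections are distinct as well. Adding coordinate $i$ raises the distance by at most one, so
\[
\ud(\pi_{T_i}(\mb c),\pi_{T_i}(\mb c'))\le j+1\le\delta-1 .
\]
This contradicts $\ud(\pi_{T_i}(C))\ge\delta$, equivalently~\eqref{eq:Aivanish}. Hence $A^{(S_i)}_j=0$ for $1\le j\le\delta-2$.

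There is no real obstacle here. The only point needing care is to keep the normalization by $|C|$ rather than by the projected code size, so that the two collision counts match exactly.
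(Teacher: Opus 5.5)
Your proposal is correct and follows essentially the same argument as the paper. Both derive the implication from $\ud(\pi_{T_i}(C))\ge 2$, and both get the vanishing because the $T_i$-distance, which is at least $j$ and at most $j+1$, would fall in $[1,\delta-1]$, contradicting $\ud(\pi_{T_i}(C))\ge\delta$.
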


\begin{proof}
If $\pi_{S_i}(\mb c) = \pi_{S_i}(\mb c')$ but $c_i \neq c'_i$, then $\ud(\pi_{T_i}(\mb c), \pi_{T_i}(\mb c')) = 1$, contradicting $\ud(\pi_{T_i}(C)) \ge \delta \ge 2$. The first claim follows, so does the equality $A^{(S_i)}_0 = A^{(i)}_0$. Next, if $\ud(\pi_{S_i}(\mb c), \pi_{S_i}(\mb c'))= j$ for some $j$ where $1 \le j \le \delta - 2$, then $\ud(\pi_{T_i}(\mb c), \pi_{T_i}(\mb c')) \in \{j, j+1\} \subseteq \{1, \ldots, \delta - 1\}$, contradicting $\ud(\pi_{T_i}(C)) \ge \delta$.
\end{proof}

\begin{theorem} \label{thm:moment-nonlinear}
Let $C \subseteq \F_q^n$ be the $(r,\delta)$-$\lrc$ described above. Then for every $i \in [n]$ and every integer $m$ with $0 \le m \le \delta - 2$, we have 
\begin{align}\label{eq:moment-nonlinear}
\sum_{\omega = 1}^R \binom{\omega - 1}{m}\, u^{(i)}_\omega \,=\, \binom{t_i - 1}{m}\,(1 - q^{-1})^{m+1}\, \frac{q^{t_i}}{|C|}\, \lambda^{(i)}.
\end{align}
\end{theorem}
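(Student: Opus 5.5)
The plan is to mimic the double-counting proof of Theorem~\ref{thm:moment-linear}, now at the level of characters, and to work directly with the character-sum definition \eqref{eq:uidef-nonlin} of $u^{(i)}_\omega$ rather than with $B^{(i)}_\omega-B^{(S_i)}_\omega$. Since $u^{(i)}_\omega=0$ for $\omega>t_i$ (and for $\omega=0$), the left-hand side of \eqref{eq:moment-nonlinear} is a sum over $\mb v\in\F_q^{T_i}$ with $v_i\ne0$, weighted by $\binom{\wt(\mb v)-1}{m}$. This binomial counts the $m$-subsets $S\subseteq\supp(\mb v)\setminus\{i\}$. Interchanging the order of summation, I would write
\[
\sum_{\omega}\binom{\omega-1}{m}u^{(i)}_\omega=\frac{1}{|C|^2}\sum_{\substack{S\subseteq T_i\setminus\{i\}\\ |S|=m}}\ \sum_{\substack{\mb v\in\F_q^{T_i}\\ v_j\ne0\ \forall j\in P_S}}\Bigl|\sum_{\mb c\in C}\chi\bigl(\mb v\cdot\pi_{T_i}(\mb c)\bigr)\Bigr|^2,
\qquad P_S\triangleq S\cup\{i\}.
\]
Here $\mb v$ is nonzero on $P_S$ and arbitrary on $T_i\setminus P_S$.

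Next I would fix $S$ and expand the square as a double sum over pairs $(\mb c,\mb c')\in C^2$ of $\chi(\mb v\cdot\pi_{T_i}(\mb c-\mb c'))$. The inner sum over $\mb v$ then factorizes coordinatewise. Write $x_j=c_j-c'_j$. A coordinate $j\notin P_S$ contributes $\sum_{a\in\F_q}\chi(ax_j)=q\mathbf 1_{\{x_j=0\}}$. A coordinate $j\in P_S$ contributes $\sum_{a\ne0}\chi(ax_j)=q\mathbf 1_{\{x_j=0\}}-1$. Hence only pairs with $\pi_{T_i\setminus P_S}(\mb c)=\pi_{T_i\setminus P_S}(\mb c')$ survive, and each surviving pair contributes
\[
q^{t_i-m-1}\prod_{j\in P_S}\bigl(q\mathbf 1_{\{x_j=0\}}-1\bigr).
\]

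The key step, where the hypothesis $m\le\delta-2$ enters, is as follows. A surviving pair has $\ud(\pi_{T_i}(\mb c),\pi_{T_i}(\mb c'))\le|P_S|=m+1\le\delta-1$. Since $\ud(\pi_{T_i}(C))\ge\delta$, which is \eqref{eq:Aivanish}, the two projections must coincide on all of $T_i$. So every $x_j=0$, the product equals $(q-1)^{m+1}$, and the number of surviving pairs is exactly $|C|\lambda^{(i)}$ by \eqref{eq:lambdai-def}. This is the nonlinear replacement for the orthogonal-array Lemma~\ref{lem:OA}.

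Each fixed $S$ therefore contributes
\[
\frac{1}{|C|^2}\,q^{t_i-m-1}(q-1)^{m+1}|C|\lambda^{(i)}=(1-q^{-1})^{m+1}\frac{q^{t_i}}{|C|}\lambda^{(i)}.
\]
Multiplying by the $\binom{t_i-1}{m}$ choices of $S$ gives \eqref{eq:moment-nonlinear}. The main thing to check carefully is the factorization bookkeeping, in particular the count $t_i-m-1$ of free coordinates and the sign of the nonzero-character sum. The distance argument itself is immediate.
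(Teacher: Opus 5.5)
Your proof is correct, but it follows a different route from the paper's.

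The paper never expands $u^{(i)}_\omega$ directly. It proceeds in four steps:
\begin{enumerate}[noitemsep, topsep=2pt]
\item[(1)] It writes $u^{(i)}_\omega=B^{(i)}_\omega-B^{(S_i)}_\omega$.
\item[(2)] It evaluates the binomial moments $\sum_\omega\binom{\omega}{m}B^{(i)}_\omega$ and $\sum_\omega\binom{\omega}{m}B^{(S_i)}_\omega$ with Lemma~\ref{lem:krawtchouk-moment}. This uses $A^{(i)}_j=0$ for $1\le j\le\delta-1$, and, through Lemma~\ref{lem:recovery}, $A^{(S_i)}_0=\lambda^{(i)}$ and $A^{(S_i)}_j=0$ for $1\le j\le\delta-2$.
\item[(3)] It subtracts to get $\sum_\omega\binom{\omega}{m}u^{(i)}_\omega$.
\item[(4)] It converts these to the required moments $\sum_\omega\binom{\omega-1}{m}u^{(i)}_\omega$ by induction on $m$ via Pascal's rule.
\end{enumerate}

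You instead let $\binom{\omega-1}{m}$ count the subsets $S\subseteq\supp(\mb v)\setminus\{i\}$ from the outset. Character orthogonality and one distance observation then finish the job: two codewords agreeing on $T_i$ outside $S\cup\{i\}$ are at local distance at most $m+1\le\delta-1$, hence agree on $T_i$.

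What your route buys:
\begin{enumerate}[noitemsep, topsep=2pt]
\item[(a)] It avoids the punctured set $S_i$, Lemma~\ref{lem:recovery}, the Krawtchouk moment lemma and the induction.
\item[(b)] It is the faithful nonlinear analogue of the proof of Theorem~\ref{thm:moment-linear}, with your pair count replacing Lemma~\ref{lem:OA}.
\item[(c)] It proves a finer per-subset fact: for every $m$-set $S$, the normalized character mass on vectors nonvanishing on $S\cup\{i\}$ equals $(1-q^{-1})^{m+1}q^{t_i}\lambda^{(i)}/|C|$.
\item[(d)] It makes transparent why the range stops at $m=\delta-2$.
\end{enumerate}

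What the paper's route buys: it stays within the distance-distribution and MacWilliams-transform language in which the LP is formulated. Along the way it yields the closed-form binomial moments of the full local transform $B^{(i)}$ and of $B^{(S_i)}$. These present the identity as the difference between the local code and its puncturing at $i$.
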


\begin{proof}
By the definition of $B^{(i)}_\omega$ in \eqref{eq:Bwi-def}, we have
\[
\sum_\omega \binom{\omega}{m} B^{(i)}_\omega \,=\, \frac{1}{|C|}\sum_j A^{(i)}_j \sum_\omega \binom{\omega}{m} K_\omega(j; t_i, q).
\]
Applying Lemma~\ref{lem:krawtchouk-moment} with length $t_i$, and using the assumption $m\leq \delta-2$,  we obtain 
\[\sum_\omega \binom{\omega}{m} K_\omega(j; t_i, q)=0 \textrm{ \ for \ } j\geq \delta.\]
Indeed, $j\ge\delta$ implies $j>m$, so Lemma~\ref{lem:krawtchouk-moment} gives zero, while the local minimum-distance condition gives $A^{(i)}_j=0$ for $1\leq j\leq \delta-1$ by~\eqref{eq:Aivanish}.  Hence only the term $j=0$ contributes to the above sum. Therefore,
\begin{align}\label{eq:Tmi}
\sum_\omega \binom{\omega}{m} B^{(i)}_\omega \,=\, \frac{1}{|C|}A^{(i)}_0\, {q^{t_i - m}}(q - 1)^m\binom{t_i}{m} \,=\, \lambda^{(i)} \frac{q^{t_i}}{|C|}(1 - q^{-1})^m \binom{t_i}{m}.
\end{align}

For the punctured set $S_i$, by Lemma~\ref{lem:recovery} we have $A^{(S_i)}_j = 0$ for $1 \le j \le \delta - 2$ and $A^{(S_i)}_0 = \lambda^{(i)}$. Lemma~\ref{lem:krawtchouk-moment} applied with length $t_i - 1$ gives, for $0 \le m \le \delta - 2$,
\begin{align}\label{eq:Tms}
\sum_\omega \binom{\omega}{m} B^{(S_i)}_\omega \,=\, \frac{1}{|C|}\lambda^{(i)} {q^{t_i - 1 - m}}(q - 1)^m\binom{t_i - 1}{m} \,=\, \lambda^{(i)} \frac{q^{t_i - 1}}{|C|}(1 - q^{-1})^m \binom{t_i - 1}{m}.
\end{align}
Subtracting~\eqref{eq:Tms} from~\eqref{eq:Tmi} and using $r^{(i)}_\omega = B^{(S_i)}_\omega$, we obtain 
\[
\sum_\omega \binom{\omega}{m} u^{(i)}_\omega \,=\, \lambda^{(i)} \frac{q^{t_i}}{|C|}(1 - q^{-1})^m\!\left[\binom{t_i}{m} - q^{-1}\binom{t_i - 1}{m}\right].
\]
Using Pascal's identity $\binom{t_i}{m} = \binom{t_i - 1}{m} + \binom{t_i - 1}{m - 1}$, the right-hand side can be rewritten as 
\begin{align}\label{eq:Tm}
\tau_m \,\triangleq\, \sum_\omega \binom{\omega}{m} u^{(i)}_\omega \,=\, \lambda^{(i)} \frac{q^{t_i}}{|C|}(1 - q^{-1})^m\!\left[(1 - q^{-1})\binom{t_i - 1}{m} + \binom{t_i - 1}{m - 1}\right].
\end{align}
Set
\[
\sigma_m \triangleq \sum_\omega \binom{\omega-1}{m}u^{(i)}_\omega,
\qquad \sigma_{-1}=0.
\]
The identity
\[
\binom{\omega}{m}=\binom{\omega-1}{m}+\binom{\omega-1}{m-1}
\]
gives $\tau_m=\sigma_m+\sigma_{m-1}$.  Thus $\sigma_0=\tau_0=(1-q^{-1})q^{t_i}\lambda^{(i)}/|C|$.  For $m\ge1$, assuming the formula for $\sigma_{m-1}$ and using~\eqref{eq:Tm}, we obtain
\begin{align*}
\sigma_m
&=\tau_m-\sigma_{m-1}\\
&=\lambda^{(i)}\frac{q^{t_i}}{|C|}(1-q^{-1})^m
\left[(1-q^{-1})\binom{t_i-1}{m}+\binom{t_i-1}{m-1}\right]
-\lambda^{(i)}\frac{q^{t_i}}{|C|}(1-q^{-1})^m\binom{t_i-1}{m-1}\\
&=\lambda^{(i)}\frac{q^{t_i}}{|C|}(1-q^{-1})^{m+1}\binom{t_i-1}{m}.
\end{align*}
This is exactly~\eqref{eq:moment-nonlinear}. 
\end{proof}

\begin{remark}\label{rmk:linear-corollary}
When $C$ is a linear $[n,k]_q$ code, then $\lambda^{(i)} = q^{k - k_i}$, so $\frac{q^{t_i}}{|C|}\lambda^{(i)} =  q^{t_i - k_i}$. Via Lemma~\ref{lem:macw-projection}, the character-sum quantity $u^{(i)}_\omega$ of~\eqref{eq:uidef-nonlin} equals exactly the number of weight-$\omega$ codewords in $U^{(i)}$ (the local set of Section~\ref{sec:linear}). Substituting this into~\eqref{eq:moment-nonlinear} gives
\[
\sum_{\omega =1}^R \binom{\omega - 1}{m}\, {u_{\omega}^{(i)}} \,=\, \binom{t_i - 1}{m}(1 - q^{-1})^{m+1} q^{t_i - k_i} \,=\, \binom{t_i - 1}{m}(1 - q^{-1})^m |U^{(i)}|,
\]
which is exactly Theorem~\ref{thm:moment-linear}, recovered as a special case. 
\end{remark}

\section{The LP Bounds}\label{sec:main}
We now convert Theorem \ref{thm:moment-nonlinear}, the local moment identities of the previous section, into global linear constraints on the distance distribution of $C$.  The auxiliary quantities $u^{(i)}_\omega$ measure the part of the local MacWilliams transform that involves the recovered coordinate $i$.  The idea is to average over all coordinates.

\subsection{The zeroth moment constraint}\label{sec:global}

Define
\begin{align*}
	u_\omega \,\triangleq\, \frac{1}{n}\sum_{i = 1}^n u^{(i)}_\omega, \quad \forall \,1 \le \omega \le R.
\end{align*}
We now derive two linear constraints linking $u_\omega$ to the distance distribution $\{A_j\}$ of $C$.

The $m = 0$ case of Theorem~\ref{thm:moment-nonlinear} gives
\begin{align}\label{eq:sumu}
	\sum_{\omega = 1}^R u_\omega \,=\, (1 - q^{-1}) \cdot \frac{1}{n}\sum_{i = 1}^n \frac{q^{t_i}}{|C|}\, \lambda^{(i)}.
\end{align}
\begin{lemma}\label{lem:zeromoment}
	Let $C$ be the $(r,\delta)$-$\lrc$ as described before. Then 
	\begin{align*}
		\sum_{\omega=1}^R u_\omega \,\ge\, (1 - q^{-1})\, q^{\delta - 1}.
	\end{align*}
\end{lemma}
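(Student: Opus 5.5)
By \eqref{eq:sumu}, it suffices to show that for every $i\in[n]$,
\[
\frac{q^{t_i}}{|C|}\,\lambda^{(i)}\;\ge\; q^{\delta-1}.
\]
Averaging this over $i$ and multiplying by $(1-q^{-1})$ then gives the lemma. So the whole proof reduces to a per-coordinate lower bound on the normalized collision count $\lambda^{(i)}$, which will come from a Singleton-type argument on the local projection.

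\textbf{Step 1: evaluate $\lambda^{(i)}$ on fibers.} Write $M_i\triangleq|\pi_{T_i}(C)|$. For each $\mb a\in\pi_{T_i}(C)$ let $f_{\mb a}$ be the size of its fiber in $C$, so that $\sum_{\mb a} f_{\mb a}=|C|$. By definition~\eqref{eq:lambdai-def},
\[
\lambda^{(i)}=\frac{1}{|C|}\sum_{\mb a}f_{\mb a}^2 .
\]
Cauchy--Schwarz gives $\sum_{\mb a} f_{\mb a}^2\ge |C|^2/M_i$, hence $\lambda^{(i)}\ge |C|/M_i$. Therefore
\[
\frac{q^{t_i}}{|C|}\lambda^{(i)}\;\ge\;\frac{q^{t_i}}{M_i}.
\]

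\textbf{Step 2: bound $M_i$.} The local code $\pi_{T_i}(C)\subseteq\F_q^{T_i}$ has length $t_i$ and minimum distance at least $\delta$ by the LRC condition. If it has a single word, then $M_i=1$ and $t_i\ge\delta>\delta-1$, so the claim holds. Otherwise, the nonlinear Singleton bound applies: puncturing any $\delta-1$ coordinates is injective on the local code. This gives $M_i\le q^{t_i-\delta+1}$, and hence $q^{t_i}/M_i\ge q^{\delta-1}$.

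\textbf{Step 3: combine.} Steps 1 and 2 together yield $\frac{q^{t_i}}{|C|}\lambda^{(i)}\ge q^{\delta-1}$ for every $i$, and substituting into \eqref{eq:sumu} finishes the proof.

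The only mildly delicate point is Step 1, namely recognizing that $\lambda^{(i)}$ is a sum of squared fiber sizes, so that Cauchy--Schwarz applies. For linear codes this step is an equality, since $\lambda^{(i)}=q^{k-k_i}$, which matches \eqref{eq:zeromom-linear}.
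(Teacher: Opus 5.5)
Your proposal is correct and follows essentially the same route as the paper. You write $\lambda^{(i)}=\frac{1}{|C|}\sum_{\mb a} f_{\mb a}^2$, apply Cauchy--Schwarz to get $\lambda^{(i)}\ge |C|/|\pi_{T_i}(C)|$, bound $|\pi_{T_i}(C)|\le q^{t_i-\delta+1}$ by the Singleton bound, and substitute into \eqref{eq:sumu}, exactly as the paper does. Your separate treatment of a one-word local code is harmless: under the paper's convention that $\ud(\cdot)=0$ for codes with at most one word, that case cannot actually occur.
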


\begin{proof}
	Fix $i \in [n]$ and let
	$$V = \pi_{T_i}(C), \quad N_{\mb{v}} = \left|\left\{\mb{c} \in C: \pi_{T_i}(\mb c)=\mb{v}\right\}\right| \quad \forall \mb{v} \in V.$$
	Since $V$ is a code of length $t_i$ with minimum distance $d(V) \ge \delta \ge 2$, the Singleton bound gives 
	$$|V|=|\pi_{T_i}(C)| \le q^{t_i-\delta+1}.$$
	By the definition of $\lambda^{(i)}$ in~\eqref{eq:lambdai-def}, we can write   
	$$\lambda^{(i)}=\frac{1}{|C|} \sum_{\mb{v} \in V}N_{\mb{v}}^2. $$
	Now noting the identity $\sum_{\mb{v} \in V} N_{\mb{v}}=|C|$ and applying the Cauchy-Schwarz inequality, we obtain $$\lambda^{(i)} \ge  \frac{|C|}{|\pi_{T_i}(C)|} \ge \frac{|C|}{q^{t_i-\delta+1}} \quad \Longrightarrow  \quad \frac{q^{t_i}}{|C|} \lambda^{(i)} \ge q^{\delta - 1}.$$ 
	Substituting into~\eqref{eq:sumu}, we obtain 
	\[
	\sum_{\omega=1}^R u_\omega \,=\, (1 - q^{-1})\cdot \frac{1}{n}\sum_{i=1}^n \frac{q^{t_i}}{|C|}\lambda^{(i)} \,\ge\, (1 - q^{-1})\, q^{\delta - 1}. \qedhere
	\]
\end{proof}

\begin{remark}
When $C$ is linear with $\dim C=k$, we have $\lambda^{(i)} = q^{k - k_i}$, and~\eqref{eq:sumu} reduces to
\[
\sum_{\omega = 1}^R {u_\omega} \,=\, (1 - q^{-1}) \cdot \frac{1}{n}\sum_i q^{t_i - k_i}, 
\]
and the bound $\frac{q^{t_i}}{|C|} \, \lambda^{(i)} \ge q^{\delta - 1}$ reduces to $q^{t_i - k_i} \ge q^{\delta - 1}$, which is exactly the linear Singleton bound~\eqref{eq:Singleton-local} restated. 
\end{remark}

\begin{lemma}\label{lem:coupling}
	\begin{align}\label{eq:coupling}
		B_{\omega} \,\ge\, \frac{n}{\omega}\, u_\omega, \quad \forall \, 1 \le \omega \le R.
	\end{align}
\end{lemma}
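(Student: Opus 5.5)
The plan is to compare the character-sum representations of $B_\omega$ and of $u^{(i)}_\omega$ term by term. Every vector of $\F_q^{T_i}$ extends by zeros to a vector of $\F_q^n$, and the idea is to show that, after averaging over $i$, each summand of $B_\omega$ is counted at most $\omega/n$ times in $u_\omega$.

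\textbf{Embedding the local sums.} For $\mb v\in\F_q^{T_i}$, let $\tilde{\mb v}\in\F_q^n$ be its extension by zeros outside $T_i$. Then $\wt(\tilde{\mb v})=\wt(\mb v)$, $\supp(\tilde{\mb v})\subseteq T_i$, and $\tilde{\mb v}\cdot\mb c=\mb v\cdot\pi_{T_i}(\mb c)$ for every $\mb c\in C$. Writing
\[
f(\mb w)\triangleq\frac{1}{|C|^2}\Bigl|\sum_{\mb c\in C}\chi(\mb w\cdot\mb c)\Bigr|^2\ge0,
\]
definition \eqref{eq:uidef-nonlin} becomes
\[
u^{(i)}_\omega=\sum_{\substack{\mb w\in\F_q^n,\ \wt(\mb w)=\omega\\ \supp(\mb w)\subseteq T_i,\ w_i\ne0}} f(\mb w).
\]
In the same notation, $B_\omega=\sum_{\wt(\mb w)=\omega}f(\mb w)$. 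This uses the global character-sum formula for $B_\omega$ stated in the preliminaries.

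\textbf{Exchanging the order of summation.} Summing over $i$ and swapping sums gives
\[
\sum_{i=1}^n u^{(i)}_\omega=\sum_{\wt(\mb w)=\omega} f(\mb w)\,\bigl|\{i\in\supp(\mb w):\supp(\mb w)\subseteq T_i\}\bigr|.
\]
The multiplicity of each $\mb w$ is at most $|\supp(\mb w)|=\omega$, because only indices $i$ with $w_i\neq 0$ can contribute. Since every $f(\mb w)\ge0$, it follows that
\[
n\,u_\omega=\sum_i u^{(i)}_\omega\le\omega\sum_{\wt(\mb w)=\omega}f(\mb w)=\omega B_\omega,
\]
which is exactly \eqref{eq:coupling}.

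The only delicate step is the first one. The local character sum must coincide exactly with the global summand $f(\tilde{\mb v})$, including the $1/|C|^2$ normalization. Both \eqref{eq:uidef-nonlin} and the formula for $B_\omega$ use $1/|C|^2$ and sum over $\mb c\in C$ rather than over $\pi_{T_i}(C)$, so the normalizations agree. The rest is double counting together with nonnegativity.
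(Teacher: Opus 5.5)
Your proposal is correct and matches the paper's proof. You extend local vectors by zeros, rewrite $\sum_i u^{(i)}_\omega$ as a sum over weight-$\omega$ vectors of $\F_q^n$ with multiplicity $|\{i : \supp(\mb v)\subseteq T_i,\ v_i\neq 0\}|\le\omega$, and conclude by nonnegativity of the summands using the global character-sum formula for $B_\omega$.
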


\begin{proof}
	We may identify any $\mb v \in \F_q^{T_i}$ with its extension $\mb{v}'$ in $\F_q^n$ such that $\supp(\mb{v}') \subseteq T_i$, so that $\mb v \cdot \pi_{T_i}(\mb c) = \mb v' \cdot \mb c$, that is, 
	\[
	u^{(i)}_\omega \,=\, \frac{1}{|C|^2}\sum_{\substack{\mb v \in \F_q^{T_i}\\ 
			\wt(\mb v) = \omega\\
		v_i \ne 0}}\!\left|\sum_{\mb c \in C}\chi(\mb v \cdot \pi_{T_i}(\mb c))\right|^2=\, \frac{1}{|C|^2}\sum_{\substack{\mb v \in \F_q^n\\ 
			\supp(\mb{v}) \subseteq T_i\\
			\wt(\mb v) = \omega\\
		v_i \ne 0}}\!\left|\sum_{\mb c \in C}\chi(\mb v \cdot \mb c)\right|^2.
	\]
	Then summing $u^{(i)}_\omega$ over $i$ we have 
	\[
	\sum_{i=1}^n u^{(i)}_\omega \,=\, \frac{1}{|C|^2}\sum_{\substack{\mb v \in \F_q^n\\ \wt(\mb v) = \omega}}\!\left|\sum_{\mb c \in C}\chi(\mb v \cdot \mb c)\right|^2 \cdot \left|\left\{i \in [n] : \supp(\mb v) \subseteq T_i,\, v_i \neq 0\right\}\right|.
	\]
	For each fixed $\mb v$ of weight $\omega$, the count $|\{i : \supp(\mb v) \subseteq T_i,\, v_i \neq 0\}|$ is at most $|\{i: v_i \neq 0\}| = \omega$. Therefore
	\[
	\sum_{i=1}^n u^{(i)}_\omega \,\le\, \omega \cdot \frac{1}{|C|^2}\sum_{\substack{\mb v \in \F_q^n\\ \wt(\mb v) = \omega}}\!\left|\sum_{\mb c \in C}\chi(\mb v \cdot \mb c)\right|^2 \,=\, \omega\cdot B_\omega.
	\]
	Dividing by $n$ on both sides gives the desired result~\eqref{eq:coupling}.
\end{proof}
\begin{remark}
In the linear case, Lemma~\ref{lem:coupling} reads as $A'_\omega \ge \frac{n}{\omega}\,u_\omega$, where $A'_\omega=W_\omega(C^\perp)=B_\omega$.  One can prove this directly from the fact that $u^{(i)}_\omega$ counts the weight-$\omega$ codewords in $U^{(i)}$ (see~\eqref{lin:ui}).  \end{remark}

\subsection{Box and convex hull constraints}\label{sec:box-CH-constraints}
The LP developed in~\cite{gruicaLRCSDuality2026} essentially uses only the zeroth-moment information captured by Lemmas~\ref{lem:zeromoment} and~\ref{lem:coupling}.  This captures the full locality information when $\delta=2$; see \cite[Lemma~4.4]{gruicaLRCSDuality2026}.  For $\delta\ge3$, the higher local moments yield additional valid inequalities.

Assume that $\delta \ge 3$. We derive two families of linear constraints on $\{u_\omega\}$ from Theorem~\ref{thm:moment-nonlinear}: the box constraints (Section~\ref{ssec:box}) and the stronger convex hull constraints (Section~\ref{ssec:CH}).

\subsubsection{The box constraints}\label{ssec:box}

Averaging Theorem~\ref{thm:moment-nonlinear} over $i \in [n]$, we obtain 
\[
\sum_{\omega=1}^R \binom{\omega - 1}{m} u_\omega \,=\, (1 - q^{-1})^{m+1} \cdot \frac{1}{n}\sum_{i=1}^n \binom{t_i - 1}{m} \frac{q^{t_i}}{|C|}\, \lambda^{(i)}, \qquad \forall \, 0 \le m \le \delta - 2.
\]
Dividing both sides by~\eqref{eq:sumu}, the normalized $m$-th factorial moment is
\[
\frac{\sum_\omega \binom{\omega - 1}{m} u_\omega}{\sum_\omega u_\omega} \,=\, (1 - q^{-1})^m \cdot \frac{\sum_i \binom{t_i - 1}{m}\, q^{t_i}\, \lambda^{(i)}}{\sum_i q^{t_i}\,\lambda^{(i)}}.
\]
Since $t \mapsto \binom{t-1}{m}$ is nondecreasing in $t$, and strictly increasing for $t \ge m+1$, and since $\delta \le t_i \le R$ with $m \le \delta-2$ for each $i$, we obtain
\begin{align}\label{eq:box-constraint}
\lambda(m)\sum_{\omega=1}^R u_\omega \,\le\, \sum_{\omega=1}^R \binom{\omega - 1}{m} u_\omega \,\le\, \mu(m) \sum_{\omega=1}^R u_\omega, \qquad \forall \, 1 \le m \le \delta - 2,
\end{align}
where
\begin{align*}
\lambda(m) \,\triangleq\, \binom{\delta - 1}{m}(1 - q^{-1})^m, \qquad \mu(m) \,\triangleq\, \binom{R - 1}{m}(1 - q^{-1})^m.
\end{align*}
We refer to~\eqref{eq:box-constraint} as the \emph{box  constraints}.

\subsubsection{The convex hull constraints}\label{ssec:CH}

The box constraints bound each factorial moment separately.  We now exploit the joint structure of all moments by requiring them to arise from one common mixture of admissible local lengths.

For each integer $t \in \{\delta, \delta + 1, \ldots, R\}$, define the vector
\begin{align*}
\vec P(t) \,\triangleq\, \left[\binom{t-1}{1}(1 - q^{-1}),\ \binom{t-1}{2}(1 - q^{-1})^2,\ \ldots,\ \binom{t-1}{\delta - 2}(1 - q^{-1})^{\delta - 2}\right] \,\in\, \mathbb R^{\delta - 2},
\end{align*}
and let
\begin{align*}
\mathcal V \,\triangleq\, \left\{\vec P(\delta),\, \vec P(\delta + 1),\, \ldots,\, \vec P(R)\right\}, \qquad \mathcal H \,\triangleq\, \mathrm{Conv}(\mathcal V) \subseteq \mathbb R^{\delta - 2}.
\end{align*}
The hull $\mathcal H$ may not be of full dimension in $\mathbb R^{\delta-2}$. Fix a complete \(H\)-representation of \(\mathcal H\),
\begin{align}
\label{eq:H-representation}
\mathcal H
=
\left\{
\mathbf y\in\mathbb R^{\delta-2}:
F\mathbf y=\mathbf f,\;
G\mathbf y\le\mathbf g
\right\}.
\end{align}
Here \(F\mathbf y=\mathbf f\) describes the affine hull of
\(\mathcal H\), while \(G\mathbf y\le\mathbf g\) describes its
relative facet inequalities.  If \(\mathcal H\) is
full-dimensional in \(\mathbb R^{\delta-2}\), then the affine-hull
system is empty.

\begin{theorem} \label{thm:convex-hull}
Let $C$ be the $(r,\delta)$-$\lrc$ as described before and assume that $\delta \ge 3$. Define
\[
\vec M \,\triangleq\, \left[\sum_{\omega=1}^R \binom{\omega - 1}{1} u_\omega,\sum_{\omega=1}^R \binom{\omega - 1}{2} u_\omega,\dots,\sum_{\omega=1}^R \binom{\omega - 1}{\delta-2} u_\omega\right] \in \mathbb R^{\delta-2}. 
\]
Then the normalized moment vector $\vec M/\sum_{\omega=1}^R u_\omega$ lies in $\mathcal H$, or  equivalently,
\begin{align}\label{eq:CH-master}
\vec M \,\in\, \left(\sum_{\omega=1}^R u_\omega\right) \cdot \mathcal H.
\end{align}
Equivalently, under the complete \(H\)-representation
\eqref{eq:H-representation}, the moment vector satisfies
\begin{align}
\label{eq:CH-Hrep}
F\vec M
=
\mathbf f\sum_{\omega=1}^R u_\omega, \qquad 
G\vec M
\le
\mathbf g\sum_{\omega=1}^R u_\omega.
\end{align}
\end{theorem}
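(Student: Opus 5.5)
The plan is to write the normalized moment vector explicitly as a convex combination of the points $\vec P(t_i)$. I start from the averaged form of Theorem~\ref{thm:moment-nonlinear}: for each $1\le m\le\delta-2$,
\[
\sum_{\omega=1}^R\binom{\omega-1}{m}u_\omega=(1-q^{-1})^{m+1}\frac1n\sum_{i=1}^n\binom{t_i-1}{m}\frac{q^{t_i}}{|C|}\lambda^{(i)},
\]
while by \eqref{eq:sumu} the zeroth moment equals $(1-q^{-1})\frac1n\sum_i \frac{q^{t_i}}{|C|}\lambda^{(i)}$.

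Next I set the weights
\[
\alpha_i\triangleq\frac{q^{t_i}\lambda^{(i)}}{\sum_{j=1}^n q^{t_j}\lambda^{(j)}},\qquad i\in[n].
\]
These are well defined and nonnegative: $\lambda^{(i)}\ge 1$, since the pair $(\mb c,\mb c)$ always contributes, so $\lambda^{(i)}>0$. They also sum to $1$. Dividing the $m$-th moment by the zeroth moment then gives, coordinatewise,
\[
\frac{\vec M}{\sum_{\omega}u_\omega}=\sum_{i=1}^n\alpha_i\,\vec P(t_i).
\]
This holds because the $m$-th entry of $\vec P(t_i)$ is exactly $\binom{t_i-1}{m}(1-q^{-1})^m$. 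Each $t_i$ is an integer with $\delta\le t_i\le R$, as noted at the start of Section~\ref{sec:facmomid}, so $\vec P(t_i)\in\mathcal V$. The normalized vector is therefore a convex combination of points of $\mathcal V$, hence lies in $\mathcal H$. That is \eqref{eq:CH-master}.

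For \eqref{eq:CH-Hrep}, I use that $\sum_\omega u_\omega>0$ by Lemma~\ref{lem:zeromoment}. Writing $\mathbf y=\vec M/\sum_\omega u_\omega\in\mathcal H$, the $H$-representation \eqref{eq:H-representation} gives $F\mathbf y=\mathbf f$ and $G\mathbf y\le\mathbf g$. Multiplying by the positive scalar $\sum_\omega u_\omega$ yields \eqref{eq:CH-Hrep}. The converse direction of "equivalently" is the same computation read backwards, again dividing by the positive scalar.

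There is no real obstacle here. The only points needing care are the positivity of the normalizer, so that the division is legitimate and the weights form a genuine convex combination, and checking that each $t_i$ lies in the index range $\{\delta,\ldots,R\}$ used to define $\mathcal V$.
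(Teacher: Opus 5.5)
Your proposal is correct and follows the paper's proof: both average Theorem~\ref{thm:moment-nonlinear} over $i$ and write the normalized moment vector as the convex combination $\sum_i \alpha_i \vec P(t_i)$ with weights $\alpha_i \propto q^{t_i}\lambda^{(i)}$, then obtain \eqref{eq:CH-Hrep} from the $H$-representation. Your explicit checks that $\lambda^{(i)}\ge 1$ and $\sum_\omega u_\omega>0$ supply small details that the paper leaves implicit.
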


\begin{proof}
By Theorem~\ref{thm:moment-nonlinear}, 
\begin{align*}
    \sum_{\omega} \binom{\omega-1}{m} u_\omega^{(i)}& =\binom{t_i-1}{m}\left(1-q^{-1}\right)^{m+1}\frac{q^{t_i}}{|C|}\lambda^{(i)}\\
    & = (1-q^{-1})\frac{q^{t_i}\lambda^{(i)}}{|C|}\left(\binom{t_i-1}{m} \left(1-q^{-1}\right)^{m}\right).
\end{align*}
Averaging over $i$, we have 
\[
\vec M \,=\, (1 - q^{-1})\cdot \frac{1}{n}\sum_{i=1}^n \frac{q^{t_i}\lambda^{(i)}}{|C|} \vec P(t_i) \,\overset{\eqref{eq:sumu}}{=}\, \left(\sum_\omega u_\omega\right) \cdot \frac{\sum_i q^{t_i} \lambda^{(i)} \vec P(t_i)}{\sum_j q^{t_j}\lambda^{(j)}}.
\]
The fraction on the right is a convex combination of points $\vec P(t_i)$ in $\mathcal V$, with nonnegative weights $q^{t_i}\lambda^{(i)} / \sum_j q^{t_j}\lambda^{(j)}$ summing to 1, hence lies in $\mathcal H$.

The relations in~\eqref{eq:CH-Hrep} now follow from the
\(H\)-representation~\eqref{eq:H-representation}.
\end{proof}

\begin{remark}
The set \(\mathcal H\) has at most \(R-\delta+1\) vertices.
Its complete \(H\)-representation consists of the equations
defining its affine hull together with its relative facet
inequalities.  These depend only on \((q,\delta,R)\), and can be
computed by a standard polyhedral package.  When \(\mathcal H\)
is full-dimensional in \(\mathbb R^{\delta-2}\), the affine-hull
equations are absent.
\end{remark}

\subsection{The linear program}\label{sec:LP}

We now assemble the constraints derived in the previous sections into a single linear program that bounds $|C|$ for any $(r, \delta)$-LRC, linear or not.

Let $C \subseteq \F_q^n$ be an $(r,\delta)$-$\lrc$ of minimum distance $d>0$. Let $\{A_j\}_{j=0}^n$ be the distance distribution of $C$ as defined in \eqref{eq:Aj-def}. Then we have
\[A_0=1, A_j = 0 \quad \forall \, 1 \le j \le d-1, \mbox{ and } A_j \ge 0 \quad \forall \, 0 \le j \le n.\]
The objective is to maximize the code size $|C|=\sum_{j=0}^n A_j$. 

We define new variables 
\begin{equation*}  
	v_\omega \triangleq |C| u_\omega, \quad \forall \, 1 \le \omega \le R.
\end{equation*}
For $\omega=1$, by using \eqref{eq:uidef-nonlin} and \eqref{eq:Bw-def}, we see that 
\begin{align}\label{eq:u1}
	u^{(i)}_1 &\,\triangleq\, \frac{1}{|C|^2}\sum_{a \in \F_q^*}\left|\sum_{\mb c \in C}\chi(a c_i)\right|^2, \qquad B_1=\sum_{i=1}^n u_1^{(i)}=nu_1, 
\end{align}
and hence 
\begin{eqnarray} \label{lp:1-moment1} \sum_{j=0}^nA_jK_1(j;n,q)=nv_1, \quad B_1=\frac{n}{|C|} v_1.\end{eqnarray}
As for $2 \le \omega \le R$, substituting the MacWilliams identity \eqref{eq:Bw-def} into  Lemma \ref{lem:coupling} yields 
\begin{equation*}
	B_\omega \,|C| =\sum_{j=0}^n A_j K_\omega(j; n, q)\ge \frac{n}{\omega} u_\omega |C| =\frac{n}{\omega} v_\omega, \quad \forall \, 2 \le \omega \le R.
\end{equation*}
Lemma \ref{lem:zeromoment} becomes 
\begin{equation*} 
	\sum_{\omega=1}^R v_\omega \ge (1 - q^{-1}) q^{\delta-1} |C|. 
\end{equation*}
Similarly in \eqref{eq:box-constraint} and Theorem \ref{thm:convex-hull} we use the new variables $v_\omega$. We obtain the following linear program.  Given positive integers $n,q,d,r,\delta$ where $n \ge d \ge \delta \ge 2$, let $R\triangleq \min\{n,r+\delta-1\}$. \\ 
\textbf{Maximize} $\displaystyle \sum_{j=0}^n A_j \,\,$ \textbf{subject to:}
\begin{align}
	& A_0 = 1, \quad A_j = 0 \;\;\forall\, 1 \le j \le d - 1, \quad A_j \ge 0 \;\;\forall\, 0 \le j \le n, \label{lp:primal_dist} \\
	& \sum_{j=0}^n A_j K_\omega(j; n, q) \ge 0, \quad \forall\, 0 \le \omega \le n, \label{lp:dual_dist} \\
	& \, \, v_\omega \ge 0, \quad \forall\, 1 \le \omega \le R, \notag \\
	& \sum_{\omega=1}^R v_\omega \ge (1 - q^{-1})q^{\delta-1} \sum_{j=0}^n A_j, \label{lp:zeroth_moment} \\
	&\sum_{j=0}^nA_jK_1(j;n,q)=nv_1, \label{lp:1-moment}\\
	& \sum_{j=0}^n A_j K_\omega(j; n, q) \ge \frac{n}{\omega} v_\omega, \quad \forall\, 2 \le \omega \le R, \label{lp:coupling}
	\end{align}
\textbf{Moment constraint} (one of the following two versions).

For convenience, put
\begin{align*}
s_v
&\triangleq
\sum_{\omega=1}^R v_\omega,
\\
\vec M_v
&\triangleq
\left(
\sum_{\omega=1}^R
\binom{\omega-1}{m}v_\omega
\right)_{m=1}^{\delta-2}.
\end{align*}

For the box LP, impose
\begin{align}
\label{lp:box}
\lambda(m)s_v
\le
\sum_{\omega=1}^R
\binom{\omega-1}{m}v_\omega
\le
\mu(m)s_v,
\qquad
1\le m\le\delta-2.
\end{align}

For the convex-hull LP, impose
\begin{equation}
\label{lp:convexhull}
\begin{aligned}
F\vec M_v = \mathbf f\,s_v,\quad G\vec M_v \le \mathbf g\,s_v.
\end{aligned}
\end{equation}
We call the formulation with~\eqref{lp:box} the
\emph{box LP}, and the formulation with~\eqref{lp:convexhull}
the \emph{convex-hull LP}.  Both are finite linear programs
with \(n+R+1\) variables, or \(n+R\) free variables after
eliminating \(A_0=1\).  In our computations, a complete
\(H\)-representation of \(\mathcal H\), including both its
affine-hull equations and its relative facet inequalities, is
computed in \texttt{SageMath}; the resulting LP is assembled
and solved using \texttt{PuLP}. We summarize our results below. 

\begin{theorem} \label{thm:lp_bound}
	Let $C \subseteq \mathbb{F}_q^n$ be an $(r, \delta)$-LRC with minimum distance at least $d$. Then $|C|$ is at most the optimal value of the box LP, and also at most the optimal value of the convex-hull LP.
\end{theorem}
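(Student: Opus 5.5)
The plan is to exhibit a feasible point of each LP whose objective value equals $|C|$. Take $A_j$ to be the distance distribution of $C$ from \eqref{eq:Aj-def}. Fix recovery sets $S_i$ and form the local quantities $u^{(i)}_\omega$ from \eqref{eq:uidef-nonlin}. Put $u_\omega=\frac1n\sum_i u^{(i)}_\omega$ and $v_\omega=|C|u_\omega$ for $1\le\omega\le R$. The objective is then $\sum_j A_j=|C|$, so it remains to check every constraint. Since the minimum distance is at least $d$, we have $A_0=1$, $A_j=0$ for $1\le j\le d-1$, and $A_j\ge0$; this gives \eqref{lp:primal_dist}. Constraint \eqref{lp:dual_dist} is $|C|B_\omega\ge0$, which follows from the character-sum representation of $B_\omega$. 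Finally, $v_\omega\ge0$ because each $u^{(i)}_\omega$ is a sum of squared moduli.

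For the locality constraints I will rescale the earlier results by $|C|$. Multiplying Lemma~\ref{lem:zeromoment} by $|C|=\sum_jA_j$ gives \eqref{lp:zeroth_moment}. Constraint \eqref{lp:1-moment} is \eqref{lp:1-moment1}. It comes from \eqref{eq:u1}: a weight-one $\mb v$ supported at position $i$ automatically has $\supp(\mb v)\subseteq T_i$ and $v_i\neq0$, so $B_1=\sum_i u_1^{(i)}=nu_1$. Then $|C|B_1=\sum_jA_jK_1(j;n,q)$ by \eqref{eq:Bw-def}. For $2\le\omega\le R$, multiplying Lemma~\ref{lem:coupling} by $|C|$ and again using \eqref{eq:Bw-def} gives \eqref{lp:coupling}. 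One point needs care here: all of these statements were derived with $d$ equal to the exact minimum distance, but none of them uses $d$ except through \eqref{lp:primal_dist}. So the hypothesis "minimum distance at least $d$" is harmless. The degenerate case $|C|\le1$ is trivial, since the optimum of either LP is at least $1$ (take $A_0=1$ and all other variables zero).

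For the moment constraints, when $\delta=2$ both families are empty and nothing is left to check. For $\delta\ge3$, the box LP constraint \eqref{lp:box} is \eqref{eq:box-constraint} multiplied by $|C|$. Because both sides are homogeneous of degree one in the $u_\omega$, replacing $u_\omega$ by $v_\omega$ is legitimate. Likewise, Theorem~\ref{thm:convex-hull} gives \eqref{eq:CH-Hrep}, and multiplying by $|C|$ yields $F\vec M_v=\mathbf f s_v$ and $G\vec M_v\le\mathbf g s_v$, which is \eqref{lp:convexhull}.

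Hence the constructed point is feasible for both LPs with objective value $|C|$, and each optimum is at least $|C|$. I do not expect a real obstacle; the argument is bookkeeping. The only delicate points are the exact equality for $\omega=1$ (as opposed to the inequality used for $\omega\ge2$) and confirming that the box bounds use $t_i\in[\delta,R]$, which holds because $t_i\le r+\delta-1$ and $t_i\le n$.
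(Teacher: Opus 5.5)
Your proposal is correct and follows the paper's proof: build the code-induced point $(A_j,\;v_\omega=|C|u_\omega)$ and check each constraint against Lemma~\ref{lem:zeromoment}, Lemma~\ref{lem:coupling}, \eqref{lp:1-moment1}, \eqref{eq:box-constraint} and Theorem~\ref{thm:convex-hull}, as you do.

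The one thing to fix is your side remark on the case $|C|\le 1$. The point with $A_0=1$ and all other variables zero is not feasible:
\begin{itemize}
\item \eqref{lp:1-moment} forces $v_1=K_1(0;n,q)/n=q-1$;
\item \eqref{lp:zeroth_moment} forces $\sum_\omega v_\omega\ge(1-q^{-1})q^{\delta-1}>0$.
\end{itemize}
The case is vacuous anyway. Under the convention $\ud(C)=0$ for $|C|\le1$, the conditions $\ud(\pi_{T_i}(C))\ge\delta\ge2$, or equally $\ud(C)\ge d\ge 2$, already force $|C|\ge2$. If you do want a witness that each optimum is at least $1$, take $A_0=1$ and $v_\omega=\binom{R-1}{\omega-1}(q-1)^\omega$; this point satisfies all constraints, with $\vec M_v=s_v\vec P(R)$.
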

\begin{proof}
Let $C$ be any $(r,\delta)$-LRC with minimum distance at least $d$.
Its distance distribution $\{A_j\}$ satisfies the standard Delsarte constraints \eqref{lp:primal_dist}--\eqref{lp:dual_dist}.  Defining $v_\omega=|C|u_\omega$, Lemmas~\ref{lem:zeromoment} and~\ref{lem:coupling}
give \eqref{lp:zeroth_moment} and \eqref{lp:coupling}, \eqref{lp:1-moment} follows from \eqref{lp:1-moment1}.  The box constraints follow from
\eqref{eq:box-constraint}, while
\eqref{eq:CH-master} and the complete
\(H\)-representation~\eqref{eq:H-representation} give
\eqref{lp:convexhull}. Hence the variables associated with $C$ give a feasible point of the corresponding LP with objective value $|C|$. Therefore the LP optimum is at least $|C|$, which proves the claimed upper bound.
\end{proof}
\begin{remark} \label{7:balanced} We call $C$ \emph{balanced} if $v_1=0$. Since 
\[u_1=\frac{1}{n}\sum_i u_1^{(i)} \quad \text{and}\quad v_1=|C|u_1,\] 
and all $u_1^{(i)}$ are nonnegative, this is equivalent to $u_1^{(i)}=0$ for every $i\in[n]$ (see \eqref{eq:u1}). This condition has the following interpretation.  For $a\in\F_q$, let
\[
N_{i,a}=|\{\mb c\in C:c_i=a\}|.
\]
It is easy to see that 
\[
u_1^{(i)}=0
\quad\Longleftrightarrow\quad
N_{i,a}=|C|/q\quad\text{for every }a\in\F_q.
\]
Thus $C$ is balanced precisely when every symbol of $\F_q$ occurs exactly $|C|/q$ times in each coordinate.  If $C$ is linear, then $C$ is balanced if and only if $C$ is nondegenerate.
\end{remark}

\begin{remark} \label{rmk:CH-vs-box}
	Projecting $\mathcal H$ onto the $m$-th coordinate axis gives the interval $[\lambda(m), \mu(m)]$. Therefore the convex hull constraint~\eqref{eq:CH-master} implies the box constraint~\eqref{eq:box-constraint}, but is generally strictly stronger. In particular, 
	\begin{enumerate}[noitemsep, topsep=2pt]
		\item[(1)] for $\delta = 2$, the box constraint~\eqref{eq:box-constraint} and  the convex hull constraint~\eqref{eq:CH-master} are both  vacuous; 
		\item[(2)] for $\delta = 3$, the moment space is one-dimensional, so the convex hull is an interval coinciding with the box at $m = 1$; 
		\item[(3)] for $\delta \ge 4$ and $R \ge \delta+1$, the convex hull is a strict refinement of the box; when $R = \delta$ the set $\mathcal V$ is a single point and the two coincide.
	\end{enumerate}
\end{remark}

\begin{example}\label{ex:delta4}
Here for $\delta=4$, we give a simple example in which the convex-hull geometry is explicit. Take $q=2$ and suppose that the possible local lengths are
\[
        t\in\{4,5,6,7,8,9\};
\]
for instance this occurs when $R=9$.  Since $1-q^{-1}=1/2$, the moment points are
\[
\begin{array}{c|cccccc}
 t      &4&5&6&7&8&9\\ \midrule
\vec P(t) = \begin{bmatrix} (t-1)/2 \\ \binom{t-1}{2}/4\end{bmatrix}      & \begin{bmatrix}1.5\\0.75\end{bmatrix} &
\begin{bmatrix}2\\1.5\end{bmatrix} &
\begin{bmatrix}2.5\\2.5\end{bmatrix} &
\begin{bmatrix}3\\3.75\end{bmatrix} &
\begin{bmatrix}3.5\\5.25\end{bmatrix} &
\begin{bmatrix}4\\7\end{bmatrix} 
\end{array}
\]
and the feasible normalized vector must lie in
\[
\mathcal H=\operatorname{Conv}\left\{\begin{bmatrix}1.5\\0.75\end{bmatrix}, \begin{bmatrix}2\\1.5\end{bmatrix}, 
\begin{bmatrix}2.5\\2.5\end{bmatrix}, 
\begin{bmatrix}3\\3.75\end{bmatrix}, 
\begin{bmatrix}3.5\\5.25\end{bmatrix}, 
\begin{bmatrix}4\\7\end{bmatrix} \right\} \subseteq \mathbb R^2.
\]
The box relaxation remembers only the coordinate-wise ranges
\[
1.5\le y_1\le4,
\qquad
0.75\le y_2\le7.
\]
It therefore admits points outside $\mathcal H$.  For example, the point $\begin{bmatrix}2.8\\ 6\end{bmatrix}$ satisfies
\[
1.5\le 2.8\le 4,\qquad 0.75\le 6\le 7,
\]
and hence satisfies all box inequalities.  However, it lies outside the convex hull $\mathcal H$. Indeed, $\mathcal H$ has the facet inequality
\[
        y_2\le \frac52 y_1-3,
\]
so at $y_1=2.8$, any point in $\mathcal H$ must have $y_2\le 4$, whereas the chosen point has $y_2=6$.  This is exactly the information that the convex-hull LP keeps and the box LP discards. 

Figure~\ref{fig:delta4-hull} illustrates the situation.  The dashed rectangle is the box relaxation, while the shaded red polygon is the actual convex hull $\mathcal H$.  The marked orange point satisfies all box inequalities but lies outside the hull. 

\begin{figure}[H]
\centering
\begin{tikzpicture}[x=2.3cm,y=0.85cm]
    \draw[help lines, gray!25, step=0.5] (1.25,0.5) grid (4.3,7.35);

    \fill[blue!8] (1.5,0.75) rectangle (4,7);
    \draw[blue!70!black, thick, dashed] (1.5,0.75) rectangle (4,7);

    \fill[red!22] (1.5,0.75) -- (2,1.5) -- (2.5,2.5) -- (3,3.75) -- (3.5,5.25) -- (4,7) -- cycle;
    \draw[red!80!black, very thick] (1.5,0.75) -- (2,1.5) -- (2.5,2.5) -- (3,3.75) -- (3.5,5.25) -- (4,7) -- cycle;
    \draw[purple!90!black, line width=1.1pt] (1.5,0.75) -- (4,7);

    \foreach \x/\y/\t in {1.5/0.75/4,2/1.5/5,2.5/2.5/6,3/3.75/7,3.5/5.25/8,4/7/9}{
        \filldraw[black] (\x,\y) circle (1.5pt);
        \node[black, above left=1pt] at (\x,\y) {\scriptsize $P(\t)$};
    }

    \filldraw[orange!90!black] (2.8,6) circle (1.5pt);

    \draw[->, thick] (1.25,0.5) -- (4.35,0.5) node[right] {$y_1$};
    \draw[->, thick] (1.25,0.5) -- (1.25,7.45) node[above] {$y_2$};
    \foreach \v in {1.5,2,2.5,3,3.5,4}{
        \draw (\v,0.47) -- (\v,0.53);
        \node[below] at (\v,0.45) {\scriptsize $\v$};
    }
    \foreach \v in {1,2,3,4,5,6,7}{
        \draw (1.22,\v) -- (1.28,\v);
        \node[left] at (1.20,\v) {\scriptsize $\v$};
    }
    \node[blue!60!black]  at (3.65,1.45) {\small box};
    \node[red!70!black]   at (2.72,3.42) {\small $\mathcal H$};
    \node[purple!90!black, align=center] at (2,5.2)
    {\scriptsize $y_2\le \frac52 y_1-3$};
    \draw[purple!90!black,->] (2.2,4.7) to[bend left=12] (2.7,3.9);
\end{tikzpicture}
\caption{The convex hull records the joint feasibility of the local moment vector.  The box is the coordinate-wise relaxation.  The marked orange point is box-feasible but not hull-feasible.}
\label{fig:delta4-hull}
\end{figure}
\end{example}

\section{Comparison with Prior LP Bounds}\label{sec:comparison}
A linear programming bound for linear $(r,\delta)$-LRCs was recently established in \cite[Theorem~4.9]{gruicaLRCSDuality2026} (see also \cite{gruicaDualityLP2023}). The construction uses a refined weight distribution $W^{\{j\}}_i(C^\perp)$, the number of weight-$i$ codewords of $C^\perp$ whose support contains the coordinate $j$, and runs an LP over the $n^2$ variables $a_{ij} \triangleq W^{\{j\}}_i(C^\perp)$ for $1 \le i \le n$ and $1 \le j \le n$; see \cite[Notation~4.8]{gruicaLRCSDuality2026}. The constraints are:
\begin{itemize}[noitemsep, topsep=2pt]
\item[(i)] $a_{ij} \ge 0$ for $1 \le i, j \le n$;
\item[(ii)] $a^\perp_{ij} \ge 0$ for $1 \le i, j \le n$;
\item[(iii)] $a^\perp_{ij} = 0$ for $1 \le i \le d-1$ and $1 \le j \le n$;
\item[(iv)] $\sum_{i=1}^{r+\delta-1} a_{ij} \ge q^{\delta-1} - q^{\delta-2}$ for $1 \le j \le n$;
\item[(v)] $a_{1j} = 0$ for $1 \le j \le n$.
\end{itemize}
Here $a^\perp_{il}$ is a specific linear combination of the $a_{js}$'s arising from a refined MacWilliams identity (see~\cite[Notation~4.8]{gruicaLRCSDuality2026}). The objective is to minimize $\sum_{i=1}^n \sum_{j=1}^n a_{ij}/i$. For the code-induced feasible point
\[
a_{ij}=W_i^{\{j\}}(C^\perp),
\]
Lemma~4.7 of \cite{gruicaLRCSDuality2026} gives
\[
\sum_{i=1}^n\sum_{j=1}^n\frac{a_{ij}}{i}
=
|C^\perp|-1.
\]
Consequently, if \(\mu^*\) denotes the minimum of the
refined-weight LP, then
\[
\mu^*\le |C^\perp|-1,
\]
and hence
\[
k\le n-\left\lceil\log_q(1+\mu^*)\right\rceil.
\]

We show that the LP of \cite[Theorem~4.9]{gruicaLRCSDuality2026} with the nondegeneracy assumption is equivalent, after coordinate symmetrization, to our balanced base LP \eqref{lp:primal_dist}--\eqref{lp:coupling}. Thus the new strengthening in the present paper comes from the moments beyond the zeroth one.

\begin{lemma}\label{lem:sym-feasible}
Let \(a=\{a_{ij}\}_{1\le i,j\le n}\) be a feasible point of
the LP in
\cite[Theorem~4.9]{gruicaLRCSDuality2026}.  Define
\[
\bar a_i\triangleq\frac1n\sum_{s=1}^n a_{is},
\qquad
a^{\mathrm{sym}}_{ij}\triangleq\bar a_i,
\qquad
1\le i,j\le n.
\]
Then \(a^{\mathrm{sym}}=\left\{a^{\mathrm{sym}}_{ij}\right\}_{1 \le i,j \le n}\) is feasible and has the same objective
value as \(a\).
\end{lemma}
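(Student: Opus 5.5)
The plan is to check each of the five constraint families (i)--(v) for \(a^{\mathrm{sym}}\), together with the objective, using only two features of the LP in \cite[Theorem~4.9]{gruicaLRCSDuality2026}. First, every constraint is either linear in \(a\) or is a nonnegativity/vanishing condition on quantities linear in \(a\). Second, the constraint system is invariant under permuting the coordinate index \(j\). Concretely, for a permutation \(\sigma\) of \([n]\) put \(a^\sigma_{ij}\triangleq a_{i\sigma(j)}\). Then
\[
a^{\mathrm{sym}}=\frac{1}{n!}\sum_{\sigma} a^\sigma .
\]
Indeed, for fixed \(j\), \(\sigma(j)\) runs uniformly over \([n]\), so the average equals \(\bar a_i\). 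Since the feasible region of an LP is convex, it suffices to prove that each \(a^\sigma\) is feasible. The objective \(\sum_{i,j}a_{ij}/i\) is invariant under \(\sigma\), hence is unchanged by the averaging.

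Constraints (i) and (v) are immediate, because \(\bar a_i\) is an average of nonnegative numbers and \(\bar a_1=0\). For (iv), averaging over \(j\) gives \(\sum_{i\le r+\delta-1}\bar a_i\ge q^{\delta-1}-q^{\delta-2}\) for every \(j\). Constraints (ii) and (iii) concern the dual quantities \(a^\perp_{il}\). For these, a direct route avoids invoking permutation invariance abstractly. The map \(a\mapsto a^\perp\) is linear, so \((a^{\mathrm{sym}})^\perp=\frac1{n!}\sum_\sigma (a^\sigma)^\perp\). It remains to show that \((a^\sigma)^\perp_{il}=a^\perp_{i\sigma(l)}\), i.e. that the refined MacWilliams transform commutes with coordinate permutations. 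Granting this, nonnegativity in (ii) and vanishing for \(i\le d-1\) in (iii) pass through the average.

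The main obstacle is this equivariance of \(a\mapsto a^\perp\). I would verify it from the explicit formula in \cite[Notation~4.8]{gruicaLRCSDuality2026}, where \(a^\perp_{il}\) is a combination of the \(a_{js}\) with coefficients depending only on \(i,j\) and on whether \(s=l\). The expected shape is a term in \(a_{jl}\) plus a term in \(\sum_s a_{js}\), with Krawtchouk-type coefficients. A formula of that shape is manifestly equivariant under relabelling \(s\mapsto\sigma(s)\), \(l\mapsto\sigma(l)\). If I read the formula this way, I can in fact compute \((a^{\mathrm{sym}})^\perp_{il}\) directly: it equals the same combination with every \(a_{js}\) replaced by \(\bar a_j\), which is \(\frac1n\sum_l a^\perp_{il}\). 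This makes (ii) and (iii) follow by averaging over \(l\), with no appeal to permutations at all. I would present this direct computation and fall back on the permutation argument only if the formula is written less symmetrically.
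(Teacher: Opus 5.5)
Your proposal is correct and follows essentially the paper's argument: average over the $S_n$-action on the coordinate index, use convexity of the feasible region, and get the equivariance $(a^\sigma)^\perp_{i\ell}=a^\perp_{i\sigma(\ell)}$ from the fact that the coefficients in \cite[Notation~4.8]{gruicaLRCSDuality2026} depend on $\ell,s$ only through whether $\ell=s$. One small correction: $a\mapsto a^\perp$ is affine rather than linear, since it has a constant term; because your averaging weights sum to one, both the group-average identity and your direct formula $(a^{\mathrm{sym}})^\perp_{i\ell}=\frac1n\sum_{\ell'}a^\perp_{i\ell'}$ remain valid.
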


\begin{proof}
For \(\sigma\in S_n\), define
\[
(\sigma\cdot a)_{ij}
\triangleq
a_{i,\sigma^{-1}(j)}.
\]
The objective and constraints \emph{(i)}, \emph{(iv)}, and
\emph{(v)} of
\cite[Theorem~4.9]{gruicaLRCSDuality2026}
are invariant under this action.

It remains to consider the transformed variables.  Write the
transformation of
\cite[Notation~4.8]{gruicaLRCSDuality2026}
in the form
\[
a^\perp_{i\ell}
=
c_0^{(i)}
+
\sum_{j,s}M^{(i,\ell)}_{j,s}a_{js}.
\]
The coefficient \(M^{(i,\ell)}_{j,s}\) depends on the coordinate
indices \(\ell,s\) only through whether \(\ell=s\).  Hence
\[
M^{(i,\ell)}_{j,\sigma(s)}
=
M^{(i,\sigma^{-1}(\ell))}_{j,s}, \quad \forall \, 1 \le \ell,s \le n. 
\]
It follows that
\begin{align*}
((\sigma\cdot a)^\perp)_{i\ell}
&=
c_0^{(i)}
+
\sum_{j,s}
M^{(i,\ell)}_{j,s}
a_{j,\sigma^{-1}(s)}
\\
&=
c_0^{(i)}
+
\sum_{j,s}
M^{(i,\ell)}_{j,\sigma(s)}a_{js}
\\
&=
c_0^{(i)}
+
\sum_{j,s}
M^{(i,\sigma^{-1}(\ell))}_{j,s}a_{js}
\\
&=
a^\perp_{i,\sigma^{-1}(\ell)}.
\end{align*}
Thus constraints \emph{(ii)} and \emph{(iii)} are also permuted by the action $\sigma$.  Therefore \(\sigma\cdot a\) is feasible
and has the same objective value as \(a\).

Since the feasible region is convex, the group average
\[
\widehat a
\triangleq
\frac1{n!}\sum_{\sigma\in S_n}\sigma\cdot a
\]
is feasible and has the same objective value.  For every
\(i,j\),
\[
\widehat a_{ij}
=
\frac1{n!}
\sum_{\sigma\in S_n}
a_{i,\sigma^{-1}(j)}
=
\frac1n\sum_{s=1}^n a_{is}
=
\bar a_i.
\]
Hence \(\widehat a=a^{\mathrm{sym}}\), proving the result.
\end{proof}

\begin{lemma} \label{lem:radialization}
Suppose that a feasible point of the refined-weight LP is
coordinate-symmetric:
\[
a_{\omega s}=\bar a_\omega,
\qquad
1\le \omega,s\le n.
\]
Define
\[
D_0\triangleq1,
\qquad
D_\omega\triangleq\frac{n}{\omega}\bar a_\omega,
\qquad
1\le\omega\le n.
\]
Then the transformed variables of
\cite[Notation~4.8]{gruicaLRCSDuality2026}
are independent of the coordinate index and satisfy
\begin{align}
\label{eq:radialization}
a^\perp_{i\ell}
=
\frac{i}{n}
\sum_{\omega=0}^n
D_\omega K_i(\omega;n,q),
\qquad
1\le i,\ell\le n.
\end{align}
\end{lemma}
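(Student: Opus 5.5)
The proof comes down to a single Krawtchouk-type identity. The plan is to write out the transformation of \cite[Notation~4.8]{gruicaLRCSDuality2026} explicitly, substitute the symmetric point, and collapse the resulting coefficients with that identity.

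\textbf{Step 1: explicit form of the transformation.} As in the proof of Lemma~\ref{lem:sym-feasible}, the coefficient $M^{(i,\ell)}_{\omega,s}$ depends on $(\ell,s)$ only through whether $s=\ell$. So I expect the transformation to read
\[
a^\perp_{i\ell}=c_0^{(i)}+\sum_{\omega}\Bigl(\alpha_{i\omega}\,a_{\omega\ell}+\beta_{i\omega}\sum_{s\ne\ell}a_{\omega s}\Bigr).
\]
Here $c_0^{(i)}$ accounts for the zero codeword of $C^\perp$ (weight $0$), and $\alpha_{i\omega},\beta_{i\omega}$ are restricted Krawtchouk numbers. I will read these off from the refined MacWilliams identity. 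Concretely:
\begin{itemize}
\item Fix a vector $\mb c$ of weight $\omega$ and a coordinate $\ell$.
\item Consider the character sum of $\chi(\mb v\cdot\mb c)$ over all $\mb v$ of weight $i$ with $v_\ell\ne0$.
\item This sum depends only on whether $\ell\in\supp(\mb c)$. Call its two values $K^{\mathrm{in}}_i(\omega)$ (when $\ell\in\supp(\mb c)$) and $K^{\mathrm{out}}_i(\omega)$ (when $\ell\notin\supp(\mb c)$).
\item Since $a_{\omega s}$ counts words having $s$ in their support, the coefficients $\alpha,\beta$ should be these two values, up to the division by $\omega$ needed to convert refined counts into word counts. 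That division gives the factor $n/\omega$ in $D_\omega$.
\end{itemize}

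\textbf{Step 2: substitute the symmetric point.} Put $a_{\omega s}=\bar a_\omega$. The sum over $s\ne\ell$ contributes $(n-1)\bar a_\omega$. Hence $a^\perp_{i\ell}$ is a combination of the $\bar a_\omega$ with coefficients independent of $\ell$, which proves the first claim. Writing $\bar a_\omega=\frac{\omega}{n}D_\omega$, the coefficient of $D_\omega$ becomes
\[
\frac{\omega}{n}K^{\mathrm{in}}_i(\omega)+\Bigl(1-\frac{\omega}{n}\Bigr)K^{\mathrm{out}}_i(\omega).
\]
This is the average over $\ell\in[n]$ of the restricted character sum. The $\omega=0$ term, with $D_0=1$, should reproduce $c_0^{(i)}$: there $\mb c=\mb 0$, so $\ell$ is never in the support, and the count of weight-$i$ vectors with $v_\ell\ne0$ is $\frac{i}{n}\binom{n}{i}(q-1)^i=\frac{i}{n}K_i(0;n,q)$.

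\textbf{Step 3: the key identity.} For a fixed $\mb c$ of weight $\omega$, summing the restricted character sum over all $\ell\in[n]$ counts each $\mb v$ of weight $i$ exactly $i$ times, once for each $\ell\in\supp(\mb v)$. The full unrestricted sum is $K_i(\omega;n,q)$. Therefore
\[
\omega K^{\mathrm{in}}_i(\omega)+(n-\omega)K^{\mathrm{out}}_i(\omega)=i\,K_i(\omega;n,q).
\]
Dividing by $n$ turns the coefficient from Step~2 into $\frac{i}{n}K_i(\omega;n,q)$, and this gives~\eqref{eq:radialization}.

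\textbf{Main obstacle.} The main risk is matching the exact normalization and shape of Notation~4.8: possible powers of $q$ from $|C^\perp|$, and whether the constant $c_0^{(i)}$ appears separately. The fallback is to avoid relying on explicit formulas for $K^{\mathrm{in}}_i$ and $K^{\mathrm{out}}_i$ altogether. Instead, verify directly that the coefficients in Notation~4.8 are these restricted character sums, and let the double-counting argument in Step~3 do the rest.
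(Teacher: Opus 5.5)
Your Steps~2 and~3 are sound, and they follow the same skeleton as the paper. You sum the coefficient of $a_{\omega s}$ over $s$, reduce to showing that the coefficient of $D_\omega$ is $\frac{i}{n}K_i(\omega;n,q)$, and treat $\omega=0$ separately. Your double count $\omega K^{\mathrm{in}}_i(\omega)+(n-\omega)K^{\mathrm{out}}_i(\omega)=iK_i(\omega;n,q)$ is exactly the paper's identity $L_{i,\omega}=iK_i(\omega;n,q)$. The paper proves it instead by checking $\sum_i L_{i,\omega}z^{i-1}=\frac{\ud}{\ud z}\bigl[(1+(q-1)z)^{n-\omega}(1-z)^{\omega}\bigr]$. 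Counting each $\mb v$ once per support coordinate is a nice explanation of where the factor $i$ comes from.

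The gap is Step~1, and that is where the content of the lemma actually sits. The lemma concerns the explicit affine map of Notation~4.8, evaluated at arbitrary LP-feasible arrays rather than code-induced ones. There the coefficients are not given as character sums but as M\"obius-inverted binomial moments. These are alternating sums with weights proportional to $(-1)^{i-t}q^{t}\binom{n-t}{i-t}$, built at level $t$ from two pieces: $\binom{n-1-\omega}{t-1}$ for words avoiding $\ell$, and $-\frac{1}{q-1}\binom{n-\omega}{t-1}$ for words containing $\ell$. The count of words avoiding $\ell$ is encoded as $\frac1\omega\sum_s a_{\omega s}-a_{\omega\ell}$.

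You only assert that $\alpha,\beta$ \emph{should be} $K^{\mathrm{in}},K^{\mathrm{out}}$ ``up to the division by $\omega$'', and this does not even determine them. Your Step~2 coefficient requires precisely $\omega\beta_{i\omega}=K^{\mathrm{out}}_i(\omega)$ and $\alpha_{i\omega}+(\omega-1)\beta_{i\omega}=K^{\mathrm{in}}_i(\omega)$. By contrast, a reading such as $\alpha=K^{\mathrm{in}}$, $\beta=K^{\mathrm{out}}/\omega$ would make the coefficient of $\bar a_\omega$ equal to $K^{\mathrm{in}}+\frac{n-1}{\omega}K^{\mathrm{out}}$, which is wrong. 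The same issue applies to your identification of $c_0^{(i)}$. Your ``fallback'' restates what must be proved rather than proving it.

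You can close the gap in either of two ways. The first is to compute directly. The generating functions $\sum_i(\cdot)\,z^{i-1}$ of the two M\"obius-inverted per-word coefficients are $(q-1)(1+(q-1)z)^{n-1-\omega}(1-z)^{\omega}$ and $-(1+(q-1)z)^{n-\omega}(1-z)^{\omega-1}$. These are $(q-1)K_{i-1}(\omega;n-1,q)$ and $-K_{i-1}(\omega-1;n-1,q)$, i.e.\ exactly your $K^{\mathrm{out}}_i(\omega)$ and $K^{\mathrm{in}}_i(\omega)$; this computation is the same size as the paper's. The second is structural. The refined MacWilliams identity behind Notation~4.8 holds for every linear code, so you can evaluate it at $C^\perp=\{\mathbf 0\}$ and at $C^\perp=\langle\mb x\rangle$ with $\wt(\mb x)=\omega$ and $\ell$ inside or outside $\supp(\mb x)$. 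This forces $c_0^{(i)}$, $\alpha+(\omega-1)\beta$ and $\omega\beta$ to equal the corresponding character sums. With either addition your argument is complete.
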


\begin{proof}
For fixed \(\omega,t,\ell\), summing the coefficient of
\(a_{\omega s}\) in
\cite[Notation~4.8]{gruicaLRCSDuality2026}
over \(s\in[n]\) gives
\begin{align*}
&
\frac{n-1}{\omega}
\binom{n-1-\omega}{t-1}
+
\frac{1-\omega}{\omega}
\binom{n-1-\omega}{t-1}
-
\frac1{q-1}\binom{n-\omega}{t-1}
\\
&\qquad
=
\frac{t}{\omega}\binom{n-\omega}{t}
-
\frac1{q-1}\binom{n-\omega}{t-1}.
\end{align*}
Since
\[
\bar a_\omega=\frac{\omega}{n}D_\omega,
\]
the contribution of \(D_\omega\) to
\(a^\perp_{i\ell}\) is \(D_\omega L_{i,\omega}/n\), where
\begin{align*}
L_{i,\omega}
\triangleq
\frac1q
\sum_{t=1}^i
(-1)^{i-t}q^t
\binom{n-t}{i-t}
\left[
(q-1)t\binom{n-\omega}{t}
-
\omega\binom{n-\omega}{t-1}
\right].
\end{align*}
We claim that
\[
L_{i,\omega}=iK_i(\omega;n,q).
\]
Indeed,
\begin{align*}
\sum_{i=1}^nL_{i,\omega}z^{i-1}
&=
\frac1q
\sum_{t=1}^n
q^t
\left[
(q-1)t\binom{n-\omega}{t}
-
\omega\binom{n-\omega}{t-1}
\right]
z^{t-1}(1-z)^{n-t}
\\
&=
(q-1)(n-\omega)
\bigl(1+(q-1)z\bigr)^{n-\omega-1}(1-z)^\omega
\\
&\qquad
-
\omega
\bigl(1+(q-1)z\bigr)^{n-\omega}(1-z)^{\omega-1}
\\
&=
\frac{\mathrm d}{\mathrm dz}
\left[
\bigl(1+(q-1)z\bigr)^{n-\omega}
(1-z)^\omega
\right]
\\
&=
\sum_{i=1}^n
iK_i(\omega;n,q)z^{i-1}.
\end{align*}
Comparing coefficients proves the claim.

Finally,
\[
\binom{n-1}{i-1}(q-1)^i
=
\frac{i}{n}\binom{n}{i}(q-1)^i
=
\frac{i}{n}K_i(0;n,q),
\]
which is precisely the \(\omega=0\) term in
\eqref{eq:radialization}.
\end{proof}
\begin{remark}
Notation~4.8 of \cite{gruicaLRCSDuality2026} writes the lower
summation limit as \(d^\perp\), although \(d^\perp\) is not an
input in Theorem~4.9.  We use the nondegenerate assumption 
\(\ud(C^\perp) \ge 2\); equivalently, the sum may begin at \(1\), since
constraint~(v) imposes \(a_{1s}=0\).
\end{remark}

\begin{proposition}
\label{prop:sym-refined-base}
Let \(\mu^*\) be the optimum of the refined-weight LP in
\cite[Theorem~4.9]{gruicaLRCSDuality2026}, and let
\(U_{\mathrm{base}}\) be the optimum of 
\eqref{lp:primal_dist}--\eqref{lp:coupling}, together with the balanced constraint \(v_1=0\), but without the
positive-order moment constraints
\eqref{lp:box} and \eqref{lp:convexhull}.
Then
\begin{align}
\label{eq:optimum-reciprocity}
U_{\mathrm{base}}
=
\frac{q^n}{1+\mu^*}.
\end{align}
Consequently, the two LPs give the same upper bound on the
dimension of a nondegenerate linear code:
\begin{align}
\label{eq:dimension-equivalence}
\left\lfloor\log_q U_{\mathrm{base}}\right\rfloor
=
n-\left\lceil\log_q(1+\mu^*)\right\rceil.
\end{align}
\end{proposition}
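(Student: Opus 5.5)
The plan is to build explicit maps in both directions between feasible points of the two LPs, under which $1+\mu = q^n/|C|$, where $|C|=\sum_j A_j$ denotes the objective value of the base LP. Since $x\mapsto q^n/(1+x)$ is strictly decreasing, minimizing $\mu$ then corresponds to maximizing $\sum_j A_j$. By Lemma~\ref{lem:sym-feasible}, I may restrict the refined-weight LP to coordinate-symmetric points $a_{\omega s}=\bar a_\omega$ without changing $\mu^*$. I then pass to $D_0=1$ and $D_\omega=\frac{n}{\omega}\bar a_\omega$ as in Lemma~\ref{lem:radialization}. The objective becomes $\sum_{i,j}a_{ij}/i=\sum_{\omega\ge1}n\bar a_\omega/\omega=\sum_{\omega\ge1}D_\omega$, and \eqref{eq:radialization} gives $a^\perp_{i\ell}=\frac{i}{n}\sum_\omega D_\omega K_i(\omega;n,q)$.

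\emph{From the refined LP to the base LP.} Given a symmetric feasible $a$, set $|C|\triangleq q^n/\sum_{\omega\ge0}D_\omega$, $B_\omega\triangleq D_\omega$, and
\[
A_j\triangleq\frac{|C|}{q^n}\sum_\omega D_\omega K_j(\omega;n,q),
\]
so that, by Krawtchouk inversion, $\sum_j A_jK_\omega(j;n,q)=|C|D_\omega$. The constraints are checked as follows.
\begin{itemize}[noitemsep, topsep=2pt]
\item $A_0=|C|\sum_\omega D_\omega/q^n=1$.
\item For $i\ge1$, $A_i=\frac{|C|}{i}a^\perp_{i\ell}$. Hence (ii) gives $A_i\ge0$, and (iii) gives $A_i=0$ for $1\le i\le d-1$.
\item The Delsarte constraints \eqref{lp:dual_dist} follow from $D_\omega\ge0$, which is (i).
\item Put $v_\omega\triangleq|C|\bar a_\omega$ for $1\le\omega\le R$. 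Then \eqref{lp:coupling} holds with equality, and (v) gives $v_1=0$ together with \eqref{lp:1-moment}, since $\sum_jA_jK_1(j;n,q)=|C|D_1=0$.
\item Multiplying (iv) by $|C|$ yields \eqref{lp:zeroth_moment}, because $q^{\delta-1}-q^{\delta-2}=(1-q^{-1})q^{\delta-1}$.
\item The objective is $\sum_jA_j=|C|=q^n/(1+\mu)$, using the Krawtchouk sum identity \eqref{eq:krawtchouk-sum} in the form $\sum_jK_j(\omega;n,q)=q^n\mathbf 1_{\{\omega=0\}}$.
\end{itemize}
Hence $U_{\mathrm{base}}\ge q^n/(1+\mu^*)$.

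\emph{From the base LP to the refined LP.} Given a feasible balanced point $(A_j,v_\omega)$ of the base LP, write $|C|=\sum_jA_j$ and $B_\omega=\frac1{|C|}\sum_jA_jK_\omega(j;n,q)\ge0$. Define the symmetric point $a_{\omega s}\triangleq\bar a_\omega\triangleq\frac{\omega}{n}B_\omega$ for all $\omega\ge1$, so that $D_\omega=B_\omega$. Note that $\bar a_\omega$ is \emph{not} taken to be $v_\omega/|C|$. The constraints are checked as follows.
\begin{itemize}[noitemsep, topsep=2pt]
\item (i) is immediate from $B_\omega\ge0$.
\item Inversion gives $a^\perp_{i\ell}=\frac{i}{n}\cdot\frac{q^n}{|C|}A_i$, which yields (ii) and (iii).
\item (v) holds since $B_1=\frac{n}{|C|}v_1=0$.
\item For (iv), the coupling constraint \eqref{lp:coupling} gives $\bar a_\omega\ge v_\omega/|C|$, so $\sum_{\omega\le R}\bar a_\omega\ge\sum_\omega v_\omega/|C|\ge(1-q^{-1})q^{\delta-1}$ by \eqref{lp:zeroth_moment}.
\item The objective is $\sum_{\omega\ge1}B_\omega=q^n/|C|-1$, since $\sum_\omega B_\omega=\frac{q^n}{|C|}A_0$ by \eqref{eq:krawtchouk-sum}.
\end{itemize}
Hence $1+\mu^*\le q^n/U_{\mathrm{base}}$, and together with the first direction this proves \eqref{eq:optimum-reciprocity}.

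I expect the main obstacle to be the bookkeeping in the second direction. The coupling constraint in our LP is only an inequality, so one must not define $\bar a_\omega=v_\omega/|C|$, which would break the exact relation $D_\omega=\frac n\omega\bar a_\omega$ needed by Lemma~\ref{lem:radialization}. The definition $\bar a_\omega=\frac\omega nB_\omega$ is what makes (iv) follow from the inequality chain above. I would also note briefly that both optima are attained, since the feasible regions are polyhedra and the objectives are bounded by the Delsarte bound, and that $\mu^*\ge0$ so the logarithms are defined.

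Finally, \eqref{eq:dimension-equivalence} follows from $\lfloor n-x\rfloor=n-\lceil x\rceil$ applied to $x=\log_q(1+\mu^*)$.
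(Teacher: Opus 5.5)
Your proposal is correct and follows essentially the same route as the paper: symmetrize via Lemma~\ref{lem:sym-feasible}, pass to $D_\omega=\frac{n}{\omega}\bar a_\omega$ via Lemma~\ref{lem:radialization}, and map feasible points in both directions so that $1+\mu=q^n/\sum_j A_j$, including the same choice $\bar a_\omega=\frac{\omega}{n}B_\omega$ (rather than $v_\omega/|C|$) in the converse. One harmless slip: in the forward direction the relation should read $A_i=\frac{n|C|}{i\,q^n}\,a^\perp_{i\ell}$ rather than $\frac{|C|}{i}\,a^\perp_{i\ell}$, but only the positivity of this constant is used, so (ii) and (iii) still give $A_i\ge0$ and $A_i=0$ for $1\le i\le d-1$.
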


\begin{proof}
By Lemma~\ref{lem:sym-feasible}, the refined-weight LP has
an optimal coordinate-symmetric solution.  We first map any
such feasible point to a feasible point of the balanced base LP.

Let \(a_{\omega s}=\bar a_\omega\) be a symmetric feasible
point of the refined-weight LP, with objective value
\[
\mu
=
n\sum_{\omega=1}^n\frac{\bar a_\omega}{\omega}.
\]
Define
\[
D_0\triangleq1,
\qquad
D_\omega\triangleq\frac{n}{\omega}\bar a_\omega
\quad(1\le\omega\le n),
\qquad
Q\triangleq\sum_{\omega=0}^nD_\omega.
\]
Then
\[
Q=1+\mu.
\]
Set
\[
M\triangleq\frac{q^n}{Q}
\]
and define
\begin{align*}
A_j
&\triangleq
\frac1Q
\sum_{\omega=0}^n
D_\omega K_j(\omega;n,q),
\qquad 0\le j\le n,
\\
v_\omega
&\triangleq
M\bar a_\omega,
\qquad 1\le\omega\le R.
\end{align*}

Since \(K_0(\omega;n,q)=1\), we have \(A_0=1\).  Moreover,
using~\eqref{eq:krawtchouk-sum},
\[
\sum_{j=0}^nA_j
=
\frac1Q
\sum_{\omega=0}^n
D_\omega
\sum_{j=0}^nK_j(\omega;n,q)
=
\frac{q^n}{Q}
=
M.
\]
By Lemma~\ref{lem:radialization},
\[
a^\perp_{j\ell}
=
\frac{j}{n}
\sum_{\omega=0}^n
D_\omega K_j(\omega;n,q)
=
\frac{jQ}{n}A_j.
\]
Consequently, constraints \emph{(ii)} and \emph{(iii)} of the
refined-weight LP imply
\[
A_j\ge0,
\qquad
A_j=0\quad(1\le j\le d-1).
\]

Krawtchouk inversion gives
\begin{align}
\label{eq:radial-transform}
\sum_{j=0}^nA_jK_\omega(j;n,q)
=
\frac{q^n}{Q}D_\omega
=
MD_\omega,
\qquad
0\le\omega\le n.
\end{align}
In particular, the Delsarte inequalities hold.  For
\(1\le\omega\le R\),
\[
MD_\omega
=
M\frac{n}{\omega}\bar a_\omega
=
\frac{n}{\omega}v_\omega,
\]
so the coupling constraints hold with equality.

Constraint \emph{(iv)} of the refined-weight LP gives
\[
\sum_{\omega=1}^R\bar a_\omega
\ge
(1-q^{-1})q^{\delta-1}.
\]
Multiplying by \(M\) gives
\[
\sum_{\omega=1}^Rv_\omega
\ge
(1-q^{-1})q^{\delta-1}M.
\]
Finally, constraint \emph{(v)} gives
\[
D_1=0,
\qquad
v_1=0,
\]
and~\eqref{eq:radial-transform} then gives
\[
\sum_{j=0}^nA_jK_1(j;n,q)=0.
\]
Thus \((A,v)\) is feasible for the balanced base LP, with
objective value
\[
M=\frac{q^n}{1+\mu}.
\]
Taking a refined-weight optimum gives
\begin{align}
\label{eq:first-opt-ineq}
U_{\mathrm{base}}
\ge
\frac{q^n}{1+\mu^*}.
\end{align}

Conversely, let \((A,v)\) be any feasible point of the
balanced base LP and put
\[
M\triangleq\sum_{j=0}^nA_j.
\]
Define
\begin{align*}
D_\omega
&\triangleq
\frac1M
\sum_{j=0}^n
A_jK_\omega(j;n,q),
\qquad 0\le\omega\le n,
\\
\bar a_\omega
&\triangleq
\frac{\omega}{n}D_\omega,
\qquad
a_{\omega s}\triangleq\bar a_\omega,
\qquad
1\le\omega,s\le n.
\end{align*}
The Delsarte constraints imply \(D_\omega\ge0\), and
\(D_0=1\).  Furthermore,
\[
\sum_{\omega=0}^nD_\omega
=
\frac1M
\sum_{j=0}^nA_j
\sum_{\omega=0}^nK_\omega(j;n,q)
=
\frac{q^n}{M},
\]
because \(A_0=1\).

By Lemma~\ref{lem:radialization} and Krawtchouk inversion,
\[
a^\perp_{i\ell}
=
\frac{i}{n}
\sum_{\omega=0}^n
D_\omega K_i(\omega;n,q)
=
\frac{i}{n}\frac{q^n}{M}A_i.
\]
Thus constraints \emph{(ii)} and \emph{(iii)} of the
refined-weight LP follow from the nonnegativity and
minimum-distance constraints on \(A_i\).  Constraint
\emph{(i)} follows from \(D_\omega\ge0\), and balancedness
gives \(D_1=0\), hence constraint \emph{(v)}.

For \(1\le\omega\le R\), the base coupling constraint gives
\[
MD_\omega
=
\sum_{j=0}^nA_jK_\omega(j;n,q)
\ge
\frac{n}{\omega}v_\omega.
\]
Therefore
\[
\bar a_\omega
=
\frac{\omega}{n}D_\omega
\ge
\frac{v_\omega}{M}.
\]
Using the base zeroth-moment constraint, we obtain
\[
\sum_{\omega=1}^R\bar a_\omega
\ge
\frac1M\sum_{\omega=1}^Rv_\omega
\ge
(1-q^{-1})q^{\delta-1},
\]
which is precisely constraint \emph{(iv)}.

The refined-weight objective of this feasible point is
\[
n\sum_{\omega=1}^n
\frac{\bar a_\omega}{\omega}
=
\sum_{\omega=1}^nD_\omega
=
\frac{q^n}{M}-1.
\]
Hence
\[
\mu^*
\le
\frac{q^n}{M}-1,
\]
or equivalently
\[
M
\le
\frac{q^n}{1+\mu^*}.
\]
Since this holds for every feasible base-LP point,
\begin{align}
\label{eq:second-opt-ineq}
U_{\mathrm{base}}
\le
\frac{q^n}{1+\mu^*}.
\end{align}
Combining~\eqref{eq:first-opt-ineq} and
\eqref{eq:second-opt-ineq} proves
\eqref{eq:optimum-reciprocity}.

Finally, since \(n\) is an integer,
\[
\left\lfloor
\log_q\frac{q^n}{1+\mu^*}
\right\rfloor
=
\left\lfloor
n-\log_q(1+\mu^*)
\right\rfloor
=
n-\left\lceil\log_q(1+\mu^*)\right\rceil,
\]
which proves~\eqref{eq:dimension-equivalence}.
\end{proof}

\section{Computational Results}\label{sec:computations}
We implemented the LPs of Section~\ref{sec:LP} for both linear and nonlinear codes using \texttt{PuLP}~3.3.0 in Python~3.11.15.  All LPs were solved with the CBC solver bundled with PuLP, namely CBC~2.10.3 (build date: Dec.~15, 2019). For the linear experiments, the LP gives an upper bound $U$ on the code size; we report the induced dimension bound $\lfloor \log_q U\rfloor$.  For the nonlinear experiments, we report the upper bound $\log_q U$ directly.  In both cases we compare the box LP and the convex-hull LP with the generalized Singleton bound and with the shortening bound for $\delta \ge 3$. We do not include the Gruica--Jany--Ravagnani LP in the main comparison table, because by Proposition~\ref{prop:sym-refined-base} it is already represented by the base part of our LP. There is no need to consider the case $\delta=2$ because the moment constraints \eqref{lp:box} and \eqref{lp:convexhull} become vacuous, and it was already considered in \cite{gruicaLRCSDuality2026}.

The code and the full scan data are available at 
\begin{center}
\texttt{https://github.com/Lsj0815/LWX-LRC}
\end{center}

Our numerical computation covers
\[
q\in\{2,3\},\quad 5\le n\le 40,\quad \delta\in\{3,4,5,6\},\quad \delta \le d\le n,
\quad 1\le r\le n-\delta+1.
\]
For linear codes we record, for each parameter tuple $(q,n,d,r,\delta)$,
\[
 k_{\mathrm{conv}},\quad k_{\mathrm{box}},\quad k_{\mathrm{SH,LP}},\quad
 k_{\mathrm{SH,exact}},\quad k_{\mathrm{GS}}.
\]
For nonlinear codes we record the analogous upper bounds on $\log_q |C|$:
\[
 M_{\mathrm{conv}}^{\mathrm{bal}},\quad M_{\mathrm{conv}}^{\mathrm{unbal}},\quad
 M_{\mathrm{box}}^{\mathrm{bal}},\quad M_{\mathrm{box}}^{\mathrm{unbal}},\quad
 M_{\mathrm{SH,LP}},\quad M_{\mathrm{SH,exact}},\quad k_{\mathrm{GS}}.
\]
Note that \(k_{\mathrm{GS}}\) is an integer bound valid for nonlinear
codes as well, satisfying \(K\le\lceil K\rceil\le
k_{\mathrm{GS}}\). Here ``balanced'' means $v_1=0$ in \eqref{lp:1-moment} (see Remark \ref{7:balanced}); this is automatic for nondegenerate linear codes but is an additional assumption for nonlinear codes.  A strict improvement in Tables~\ref{tab:linear-conv-examples} and~\ref{tab:nonlinear-conv-examples} means that the reported convex-hull bound is smaller than each listed baseline after applying the same rounding convention used for that column.

Tables~\ref{tab:linear-conv-examples}--\ref{tab:regime-summary} summarize the computation without printing the full data set.  The first two tables list representative parameter sets for which our convex-hull bounds are strictly tighter than the shortening LP bound, the table-based shortening bound, and the generalized Singleton bound.  Table~\ref{tab:regime-summary} then records, for four regimes for $d/n$ and $(r+\delta-1)/n$, the total number of parameter sets and the number for which the convex-hull bound is at least as strong as the shortening LP bound.

\begin{table}[H]
\centering
\scriptsize
\setlength{\tabcolsep}{4pt}
\caption{Representative linear-code examples where Conv, the convex-hull LP dimension bound, is strictly smaller than SH-LP, the shortening bound using ordinary Delsarte LP auxiliary estimates; SH-exact, the shortening bound using Grassl table values; and GS, the generalized Singleton bound.}
\label{tab:linear-conv-examples}
\begin{tabular}{ccccc|rrrr}
\toprule
$q$ & $n$ & $d$ & $r$ & $\delta$ & Conv & SH-LP & SH-exact & GS \\
\midrule
2 & 40 & 17 & 1 & 3 & 5 & 6 & 6 & 8 \\
2 & 40 & 16 & 1 & 4 & 5 & 6 & 6 & 7 \\
2 & 40 & 17 & 2 & 5 & 5 & 6 & 6 & 8 \\
2 & 40 & 16 & 4 & 6 & 8 & 9 & 9 & 12 \\
3 & 40 & 23 & 1 & 3 & 4 & 5 & 5 & 6 \\
3 & 12 & 6 & 6 & 4 & 5 & 6 & 6 & 6 \\
3 & 12 & 6 & 6 & 5 & 5 & 6 & 6 & 6 \\
3 & 29 & 15 & 2 & 6 & 4 & 5 & 5 & 5 \\
\bottomrule
\end{tabular}

\end{table}
\begin{table}[H]
\centering
\scriptsize
\setlength{\tabcolsep}{4pt}
\caption{Representative nonlinear-code examples where Conv, the unbalanced convex-hull LP bound, is strictly smaller than SH-LP, the shortening bound using ordinary Delsarte LP auxiliary estimates; SH-exact, the shortening bound using Brouwer table values; and GS, the generalized Singleton bound.  The entries are upper bounds on $\log_q|C|$. The GS column reports the integer bound \(k_{\mathrm{GS}}\), which is
valid for nonlinear codes by~\cite{westerbackPolymatroid2015}.}
\label{tab:nonlinear-conv-examples}
\begin{tabular}{ccccc|rrrr}
\toprule
$q$ & $n$ & $d$ & $r$ & $\delta$ & Conv & SH-LP & SH-exact & GS \\
\midrule
2 & 25 & 10 & 3 & 3 & 7.326 & 8.000 & 8.000 & 10.000 \\
2 & 21 & 8 & 3 & 4 & 6.419 & 7.000 & 7.000 & 8.000 \\
2 & 22 & 8 & 3 & 5 & 6.466 & 7.000 & 7.000 & 7.000 \\
2 & 16 & 6 & 8 & 6 & 7.637 & 8.000 & 8.000 & 8.000 \\
3 & 13 & 3 & 10 & 3 & 9.877 & 10.000 & 10.000 & 10.000 \\
3 & 13 & 4 & 9 & 4 & 8.880 & 9.000 & 8.981 & 9.000 \\
3 & 10 & 5 & 5 & 5 & 4.806 & 5.000 & 5.000 & 5.000 \\
3 & 11 & 6 & 5 & 6 & 4.778 & 5.000 & 5.000 & 5.000 \\
\bottomrule
\end{tabular}
\end{table}

\begin{table}[H]
\centering
\scriptsize
\setlength{\tabcolsep}{4pt}
\caption{Regime summary for the nonlinear exhaustive scan using the unbalanced convex-hull bound.  The column $N$ counts parameter tuples $(q,n,d,r,\delta)$ in each regime, and Conv $\le$ SH-LP counts tuples for which the unbalanced convex-hull bound is no larger than the nonlinear SH-LP bound.  The bins use threshold $0.20$: low means $\le 0.20$, and medium/high means $>0.20$.}
\label{tab:regime-summary}
\begin{tabular}{cc|cc}
\toprule
$d/n$ bin & $(r+\delta-1)/n$ bin & $N$ & Conv $\le$ SH-LP \\
\midrule
low & low & 1212 & 0 \\
low & medium/high & 9952 & 3874 \\
medium/high & low & 9952 & 5617 \\
medium/high & medium/high & 114292 & 96715 \\
\bottomrule
\end{tabular}
\end{table}

The observed pattern is consistent with the structure of the constraints.  The convex-hull LP is never weaker than the box LP because it implies all box inequalities, and it is often strictly stronger when $\delta\ge4$.  Within the binary and ternary scan considered here, the representative examples above show improvements over the shortening LP, the table-based shortening bound, and the generalized Singleton bound.  At the same time, Table~\ref{tab:regime-summary} shows that shortening bounds remain important, especially when both $d/n$ and $(r+\delta-1)/n$ are small.  The new LPs are therefore best understood as complementary to shortening methods rather than as replacements for them.

\section{Conclusion}\label{sec:conclusion}
We developed a moment-based LP framework for $(r,\delta)$-locally
recoverable codes with two main advantages: it is substantially
smaller and more symmetric than the LP of
\cite{gruicaLRCSDuality2026}, and it applies without linearity
assumptions. The computations reported in
Section~\ref{sec:computations} identify many parameter sets where
the higher-moment constraints improve standard baseline bounds,
while also showing that the shortening bound remains stronger in
certain regimes.

Several questions remain open. First, the present LP uses moment
information only up to order $\delta - 2$; it would be interesting
to determine whether additional
structural information can yield further valid
inequalities. Second, the regime in which the shortening bound dominates suggests that a hybrid LP-shortening method may give the stronger practical bounds; formalizing such a hybrid is a natural
next step. Finally, the nonlinear balanced and unbalanced variants of our LP deserve a more systematic comparison with the best available bounds for $A_q(n, d)$, particularly in regimes where Delsarte-type bounds are known to be tight or near-tight. 



\section*{Acknowledgment}
The authors are grateful to Anina Gruica for kindly sharing the SageMath implementation of the LP bound in~\cite{gruicaLRCSDuality2026} and for helpful correspondence.

\bibliographystyle{IEEEtranS}


\end{document}